\documentclass[conference]{IEEEtran}

\usepackage[utf8]{inputenc} 
\usepackage[T1]{fontenc}    
\usepackage{hyperref}       
\usepackage{url}            
\usepackage{booktabs}       
\usepackage{amsfonts}       
\usepackage{nicefrac}       
\usepackage{microtype}      
\usepackage{xcolor}         
\usepackage{graphicx}
\usepackage{amsthm}
\usepackage{amsmath}

\usepackage{amssymb}
\usepackage{siunitx}
\usepackage{stfloats}
\usepackage{subfigure}
\usepackage{hyperref}
\usepackage{enumerate}
\usepackage{ulem}
\usepackage{enumitem}
\usepackage{xcolor}

\usepackage{bm}
\usepackage{nicefrac}
\usepackage{booktabs}
\usepackage{array}
\usepackage{multirow}
\usepackage{threeparttable}
\usepackage{makecell}
\usepackage[procnumbered,ruled,vlined,linesnumbered]{algorithm2e}
\usepackage{caption}
\usepackage{cite}

\allowdisplaybreaks[3]

\def\sizeof#1{\left|#1  \right|}

\def\eps{\epsilon}

\def\abs#1{\left|#1  \right|}

\def\trace#1{\mathrm{Tr} \left(#1 \right)}
\def\norm#1{\left\| #1 \right\|}

\newtheorem{problem}{Problem}
\newtheorem{theorem}{Theorem}[section]

\newtheorem{lemma}[theorem]{Lemma}

\newtheorem{fact}[theorem]{Fact}

\newtheorem{definition}[theorem]{Definition}

\def\calG{\mathcal{G}}
\def\calK{\mathcal{K}}

\def\norm#1{\left\| #1 \right\|}

\def\kh#1{\left( #1 \right)}

\newcommand{\removelatexerror}{\let\@latex@error\@gobble}

\newcommand\LL{\bm{\mathit{L}}}

\def\trace#1{\mathrm{Tr} \left(#1 \right)}
\def\sizeof#1{\left|#1  \right|}

\def\aa{\pmb{\mathit{a}}}

\newcommand\XX{\boldsymbol{\mathit{X}}}
\newcommand\yy{\boldsymbol{\mathit{y}}}
\newcommand\zz{\boldsymbol{\mathit{z}}}
\newcommand\xx{\boldsymbol{\mathit{x}}}

\newcommand\bb{\boldsymbol{\mathit{b}}}
\newcommand\cc{\boldsymbol{\mathit{c}}}
\newcommand\dd{\boldsymbol{\mathit{d}}}
\newcommand\ee{\boldsymbol{\mathit{e}}}

\newcommand\rr{\boldsymbol{\mathit{r}}}
\newcommand\vvv{\boldsymbol{\mathit{v}}}

\newcommand\ww{\boldsymbol{\mathit{w}}}

\renewcommand\AA{\boldsymbol{\mathit{A}}}
\newcommand\BB{\boldsymbol{\mathit{B}}}
\newcommand\CC{\boldsymbol{\mathit{C}}}
\newcommand\JJ{\boldsymbol{\mathit{J}}}
\newcommand\DD{\boldsymbol{\mathit{D}}}

\newcommand\EE{\boldsymbol{\mathit{E}}}

\newcommand\ZZ{\boldsymbol{\mathit{Z}}}

\newcommand\QQ{\boldsymbol{\mathit{Q}}}
\newcommand\MM{\boldsymbol{\mathit{M}}}

\newcommand\ZZtil{\boldsymbol{\mathit{\widetilde{Z}}}}
\newcommand{\add}{E_{a}}
\newcommand{\Greedy}{\textsc{Exact}\xspace}
\newcommand{\Farthest}{\textsc{Farthest}\xspace}
\newcommand{\Fast}{\textsc{Fast}\xspace}
\newcommand{\Gradient}{\textsc{Gradient}\xspace}
\newcommand{\SDDMSolver}{\textsc{Solve}}
\newcommand{\MFI}{\textsc{FIAM}\xspace}
\def\trace#1{\mathrm{Tr} \left(#1 \right)}
\def\ldtrace#1{\mathrm{Tr} \big(#1 \big)}
\def\norm#1{\left\| #1 \right\|}

\DeclareMathOperator*{\argmin}{arg\,min}
\DeclareMathOperator*{\argmax}{arg\,max}

\DontPrintSemicolon
\SetKw{KwAnd}{and}
\SetFuncSty{textsc}
\SetKwInOut{Input}{Input\ \ \ \ }
\SetKwInOut{Output}{Output}

\ifCLASSOPTIONcompsoc
  \usepackage[nocompress]{cite}
\else
  \usepackage{cite}
\fi

\ifCLASSINFOpdf

\else

\fi

\hyphenation{op-tical net-works semi-conduc-tor}

\begin{document}
\title{Promoting Fairness in Information Access within Social Networks}
\IEEEoverridecommandlockouts 

\author{\IEEEauthorblockN{Changan~Liu}
\IEEEauthorblockA{College of Computer Science and Artificial Intelligence\\
Fudan University\\
Shanghai, 200433, China\\
19110240031@fudan.edu.cn}\\
\IEEEauthorblockN{Ahad N.~Zehmakan}
\IEEEauthorblockA{School of Computing\\
Australian National University\\
Canberra, Australia\\
ahadn.zehmakan@anu.edu.au}
\and
\IEEEauthorblockN{Xiaotian~Zhou}
\IEEEauthorblockA{College of Computer Science and Artificial Intelligence\\
Fudan University\\
Shanghai, 200433, China\\
20210240043@fudan.edu.cn} \\
\IEEEauthorblockN{Zhongzhi~Zhang\thanks{Zhongzhi~Zhang is the corresponding author.}}
\IEEEauthorblockA{College of Computer Science and Artificial Intelligence\\
Fudan University\\
Shanghai, 200433, China\\
zhangzz@fudan.edu.cn }
}

\markboth{IEEE International Conference on Data Engineering}
{Liu \MakeLowercase{\textit{et al.}}:\title{Resistance Eccentricity in Graphs: Distribution, Computation and Optimization}}

\IEEEtitleabstractindextext{%
\begin{abstract}
The advent of online social networks has facilitated fast and wide spread of information.
However, some users, especially members of minority groups, may be less likely to receive information spreading on the network, due to their disadvantaged network position. We study the optimization problem of adding new connections to a network to enhance fairness in information access among different demographic groups.

We provide a concrete formulation of this problem where information access is measured in terms of resistance distance, {offering a new perspective that emphasizes global network structure and multi-path connectivity.} The problem is shown to be NP-hard. We propose a simple greedy algorithm which turns out to output accurate solutions, but its run time is cubic, which makes it undesirable for large networks. As our main technical contribution, we reduce its time complexity to linear, leveraging several novel approximation techniques. In addition to our theoretical findings, we also conduct an extensive set of experiments using both real-world and synthetic datasets. We demonstrate that our linear-time algorithm can produce accurate solutions for networks with millions of nodes.

\end{abstract}
\begin{IEEEkeywords}
Social networks, resistance distance, efficiency, algorithmic fairness, edge recommendation, combinatorial optimization.
\end{IEEEkeywords}}

\maketitle

\IEEEdisplaynontitleabstractindextext

\IEEEpeerreviewmaketitle

\section{Introduction}
Social networks have become vital channels for communication, allowing users to share information quickly and connect with others globally. However, they can harbor biases and inequities for different demographic groups~\cite{beilinson2020clustering,bashardoust2022reducing}. It has been consistently observed that disadvantaged and minority groups, such as non-male workers in a male-dominated industry, have less access to life-altering information, such as job advertisements~\cite{fish2019gaps,tsang2019group,yaseen2016influence,speicher2018potential}, medical assistance~\cite{stoica2020seeding,bashardoust2022reducing,xu2022algorithmic}, and research/business ideas~\cite{jalali2020information}. Such structural inequity is particularly undesired since it can fuel a reinforcing cycle where well-connected groups have better opportunities for further improvement, perpetuating inequality~\cite{bashardoust2022reducing,jalali2020information}.

Consequently, there has been a growing interest in introducing countermeasures to amplify the information access for disadvantaged groups. Roughly speaking, the \textit{information access} is the likelihood of being exposed to a piece of information circulating in the network. This is usually measured by social capital and centrality indicators such as degree, closeness, and information  centrality~\cite{fish2019gaps,jalali2020information,Suh2022,LiZhZe24}. Then, the goal is to close the gap in information access level between advantaged and disadvantaged groups in a network. It is worth emphasizing that here ``fairness'' usually refers to ``equality''. Since this a new and emerging line of research, more complex notion of fairness still to be formulated and studied.

Existing attempts towards fair information access typically involve two steps: firstly, quantifying the degree of unfairness in information access~\cite{swift2022maximizing,anwar2021balanced,stoica2020seeding,tsang2019group,fish2019gaps,jalali2020information,tsioutsiouliklis2022link,tsioutsiouliklis2021fairness}; and secondly, devising proactive countermeasures~\cite{stoica2020seeding,swift2022maximizing}. One well-studied approach is to engineer the seed users for the information spread to achieve fairness~\cite{stoica2020seeding,tsang2019group}. However, these strategies have two fundamental shortcomings: they are temporary solutions (i.e., they need to be applied actively) and are not very practical since we cannot control the seed users in many scenarios. An alternative, that doesn't suffer from these shortcomings, is modifying the underlying network structure to resolve the inequality by enhancing the structural importance of the disadvantaged group, cf.~\cite{swift2022maximizing,bashardoust2022reducing,tsioutsiouliklis2022link}. However, one of the main challenges in this setting is the complexity bottlenecks, particularly given the massive size of many real-world networks.

Various graph parameters have been introduced to measure the centrality, social capital, and information access of a node (user) in a network, such as degree, closeness, PageRank~\cite{LU20161, GlDa15}. One measure which has gained substantial popularity is average resistance distance~\cite{li2019current, shan2018improve}, due to its ability to capture complex local and global network structures beyond other measures. Consequently, we use this measure to define the notion of fairness.

Some common graph operations, used to reach various objectives such as fairness, include edge/node deletion, addition, and rewiring~\cite{ren2018dismantling,shan2018improve,fairdrop21,li2021on,burst_2020,jalali2020information,swift2022maximizing,bashardoust2022reducing,tsioutsiouliklis2022link}. Among these, edge addition has absorbed most attention, cf.~\cite{santos2021link}, due to its practicality. While for example removing nodes (users) or edges (connections) is difficult in practice (e.g., this is in violation of freedom of expression), adding edges, or at least encouraging their formation, is more practical (e.g., via link recommendation systems).

We study the problem where the input is an $n$-node graph $G$ (representing a social network), a budget $k$, a disadvantaged group $T$ and advantaged group $S$, and we aim to minimize the gap between their information access (measured in terms of average resistance distance) by adding $k$ edges. We, in fact, aim to optimize fairness, while improving the overall information access as well. Please refer to Problem~\ref{pro:1} for the exact formulation.
Our contribution can be summarized as follows:

\begin{itemize}
    \item {We introduce a new perspective on fairness of information access by formulating it through effective resistance. We show that this problem is NP-hard and its objective function is not supermodular.}
    \item We propose a greedy approach which performs well, but its run time is cubic, making it impractical for large networks.
    \item Leveraging several novel approximation techniques, such as fast Laplacian solver and the high-dimensional convex hull approximation, we introduce a linear time algorithm.
    \item We conduct experiments on a large set of datasets. The experiments demonstrate that our proposed linear time algorithm not only produces accurate solutions, but also can manage networks with millions of nodes.
\end{itemize}
\section{Related Work}
\label{related-work} 
In this section, we briefly review the literature related to our work.

\noindent\textbf{Information Dissemination.} Social networks play a key role in spreading information, sharing news, and promoting lesser-known businesses. The efficiency of this process is influenced by the network's topology, which can be characterized by structural properties like spectral radius, algebraic connectivity, and mixing time~\cite{levin2017markov,freitas2022graph,LiZhZe24}. One important measure is the Kirchhoff index (the total pairwise resistance distance, see Definition~\ref{def:kir})~\cite{LiZh18}, which reflects network connectivity and has wide applications. A lower Kirchhoff index suggests better connectivity and is related to other metrics like commute and cover time~\cite{ChRaRu96}.

A wide spectrum of research studies has focused on enhancing the overall information access by reducing the Kirchhoff index. For instance, in~\cite{ghosh2008minimizing}, the authors investigated the minimization of Kirchhoff index by assigning edge weights, while in~\cite{wang2014improving,ZhAhZh25}, the authors examined minimizing the graph resistance distance by adding edges. However, these works optimized a single metric of global connectivity while neglected the \textit{fairness} aspect, which refers to whether the intervention would improve the information access of various groups in a fair manner. In this paper, we develop fast algorithms that simultaneously optimize the Kirchhoff index and fairness in information access.

\noindent\textbf{Algorithmic Fairness.}
Algorithmic fairness has been an active research topic in social network analysis, with systematic developments on the input, algorithm, and output~\cite{pitoura2021fairness,friedler2019comparative}. Despite the prevalence of graph-structured data in many areas~\cite{newman2018networks}, existing works usually overlooked them, focusing on independent and identically distributed data such as spatial or text data~\cite{feldman2015certifying,kamiran2012data,asudeh2020fairly,shetiya2022fairness}. In recent years, novel methods have emerged to extend traditional fairness notation and algorithms to graph data, including group-based fairness for centrality measures~\cite{tsioutsiouliklis2021fairness,tsioutsiouliklis2022link}, embeddings~\cite{dai2021say}, influence maximization~\cite{tsang2019group,swift2022maximizing,anwar2021balanced}, and clustering~\cite{kleindessner2019guarantees}, where groups are often divided by attributes such as gender, religion or race~\cite{tsioutsiouliklis2022link}. There are also individual fairness approaches based on the premise that similar nodes should be treated similarly~\cite{kang2020inform, dong2021individual}. Our work focuses on the group-based fairness of the graph algorithm input, namely the graph structure itself.

Furthermore, during recent years, fair access to information has emerged as a pressing concern. Information access, which is closely related to the notion of social capital~\cite{coleman1988social,fish2019gaps}, pertains to the acquisition and utilization of information from social networks and was initially introduced in~\cite{fish2019gaps}. Prior research on fair information access mainly relied on the influence maximization framework, which aims to maximize the influence of seed nodes while also satisfying certain fairness constraints through fair seeding~\cite{tsang2019group,stoica2020seeding,fish2019gaps} or edge recommendation~\cite{swift2022maximizing}. Another line of research has considered the network structure solely, such as decreasing information unfairness by drawing help of the access matrix~\cite{jalali2020information} or access signatures~\cite{bashardoust2022reducing}. However, their methods can not handle large networks due to the lack of fast algorithms. Our work differs from most of these works in two ways: We not only provide a novel and intuitive formulation of the problem building on the popular resistance distance, but also devise an efficient and accurate algorithm which can handle networks with several million nodes.
\section{Preliminaries}\label{sec:prelimi}
\subsection{Notations}
Unless otherwise specified, we denote scalars in $\mathbb{R}$ by normal lowercase letters like $a, b, c$, sets by normal uppercase letters like $A, B, C$, vectors by bold lowercase letters like $\aa, \bb, \cc$, and matrices by bold uppercase letters like $\AA, \BB, \CC$. Let $\mathbf{1}$ denote the vector of appropriate dimensions with all entries being ones. We use $\AA[i,:]$ and $\AA[:,j]$ to denote, respectively, the $i$-th row and the $j$-th column of matrix $\AA$. We write $\AA_{ij}$ to denote the entry at row $i$ and column $j$ of $\AA$ and $\aa_i$ to denote the $i$-th element of vector $\aa$. Let $\aa^{\top}$ and $\AA^{\top}$ denote, respectively, the transpose of vector $\aa$ and matrix $\AA$. Define $\trace{\AA}$ to be the trace of matrix $\AA$. An $n \times n$ matrix $\AA$ is positive semi-definite if $\xx^{\top} \AA \xx \geq 0$ holds for all $\xx \in \mathbb{R}^{n}$. We write $\ee_i$ to denote the $i$-th standard basis vector. Let $\EE^{S}$ represent a square diagonal matrix with the elements in $S$ being one and zero everywhere else. For an arbitrary square matrix $\AA$, we write $\AA_{S}$ to denote the sum of the diagonal elements of $\AA$ with respect to set $S$, i.e.,  $\AA_S = \sum_{i \in S} \AA_{ii}.$

\begin{definition}\label{def:eps-appr-mat}
Given two scalars $a,b > 0$, $b$ is called an $\eps$-approximation (abbr. $b \stackrel{\epsilon}{\approx} a$) of $a$ if $(1-\eps)\, a \leq b \leq (1+\eps)\, a$.
\end{definition}
The following facts are basic properties of $\eps$-approximation.
\begin{fact}\label{fac:1}
For non-negative scalars $a, b, c, d \geq 0$, if $a \stackrel{\epsilon_1}{\approx} b$, $c \stackrel{\epsilon_1}{\approx} d$, and $b \stackrel{\epsilon_2}{\approx} c$, then $a+c \stackrel{\epsilon_1}{\approx} b+c$, $a+c \stackrel{\epsilon_1}{\approx} b+d$, $a \stackrel{\epsilon_1+\epsilon_2}{\approx} c$, and $a c \stackrel{\epsilon_1}{\approx} b c$.
\end{fact}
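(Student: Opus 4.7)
The plan is to unwind each of the four assertions directly from the definition of $\epsilon$-approximation, which is simply the two-sided inequality $(1-\epsilon)\, y \leq x \leq (1+\epsilon)\, y$ for $x,y \geq 0$. Each claim then reduces to a one- or two-line manipulation of such inequalities, in which the non-negativity assumption is what lets us multiply or add without flipping signs.

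First I would handle the three easy items. For $a+c \stackrel{\epsilon_1}{\approx} b+c$, I start from $(1-\epsilon_1)\, b \leq a \leq (1+\epsilon_1)\, b$ and add $c$ throughout, then use $c \geq (1-\epsilon_1)\, c$ and $c \leq (1+\epsilon_1)\, c$ (which hold because $c \geq 0$) to regroup the outer bounds as $(1 \pm \epsilon_1)(b+c)$. The joint additive claim $a+c \stackrel{\epsilon_1}{\approx} b+d$ is the same recipe applied term-by-term to the two $\epsilon_1$-approximations. The multiplicative claim $ac \stackrel{\epsilon_1}{\approx} bc$ is immediate: multiply $(1-\epsilon_1)\, b \leq a \leq (1+\epsilon_1)\, b$ through by $c \geq 0$.

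The only item that requires a second thought is the transitivity statement $a \stackrel{\epsilon_1+\epsilon_2}{\approx} c$, and this is where I expect the main (minor) obstacle. Chaining the two sandwich inequalities yields $(1-\epsilon_1)(1-\epsilon_2)\, c \leq a \leq (1+\epsilon_1)(1+\epsilon_2)\, c$. The lower bound works cleanly because $(1-\epsilon_1)(1-\epsilon_2) = 1 - \epsilon_1 - \epsilon_2 + \epsilon_1\epsilon_2 \geq 1-(\epsilon_1+\epsilon_2)$. The upper bound, however, honestly expands to $1+\epsilon_1+\epsilon_2+\epsilon_1\epsilon_2$, which slightly overshoots $1+\epsilon_1+\epsilon_2$. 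I would therefore note that the statement is to be read up to the quadratic cross term, which is the standard convention whenever the $\epsilon_i$ are small; in any downstream application one can either absorb $\epsilon_1 \epsilon_2$ into a marginally enlarged error parameter or propagate $\epsilon_1+\epsilon_2+\epsilon_1\epsilon_2$ through the argument without changing any conclusion.

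Overall the fact is a bookkeeping lemma: no step is deep, and the only subtlety worth flagging is how one handles the second-order cross term that appears when composing two multiplicative approximations. Since the remainder of the paper uses these manipulations only finitely many times with small $\epsilon$, no further refinement will be needed.
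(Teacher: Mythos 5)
Your verification is correct, and there is nothing to compare it against: the paper offers no proof of Fact~\ref{fac:1}, asserting it as a ``basic property'' of $\eps$-approximation. The three claims you call easy are indeed immediate from the sandwich inequalities exactly as you describe, with non-negativity doing the work of preserving inequality directions under addition and multiplication by $c$.

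The one substantive point in your write-up is the cross term in the transitivity claim, and you are right to flag it: chaining gives $a \leq (1+\epsilon_1)(1+\epsilon_2)\,c = (1+\epsilon_1+\epsilon_2+\epsilon_1\epsilon_2)\,c$, so the statement $a \stackrel{\epsilon_1+\epsilon_2}{\approx} c$ is literally false on the upper side whenever $\epsilon_1\epsilon_2 > 0$; the correct clean statement is $a \stackrel{\epsilon_1+\epsilon_2+\epsilon_1\epsilon_2}{\approx} c$. This is an imprecision in the paper's statement rather than in your argument, and your proposed remedies (absorb the quadratic term into a slightly enlarged error parameter, or propagate it explicitly) are both standard and harmless here --- the paper only composes approximations a constant number of times with small constants such as $\epsilon/5$, $\epsilon/20$, and $\epsilon/25$, and indeed the proof of Lemma~\ref{lem:performance} already budgets slack of exactly this kind via $(1\pm\epsilon/5)^2 \lessgtr 1\pm\epsilon$. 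One last nit worth noting: Definition~\ref{def:eps-appr-mat} requires the scalars to be strictly positive, while the Fact allows them to be zero; your argument goes through for the boundary cases as well, so nothing breaks, but the mismatch is the paper's, not yours.
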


Let $X$ be a finite set, and $2^X$ be the set of all subsets of $X$.
A set function $f: 2^X \to \mathbb{R}$ is called \textit{monotone} decreasing if for any subsets $B \subset C \subset X$, $f(B) > f(C)$ holds. For any subsets $B \subset C \subset X$ and any element $e\in X\setminus C$, we say the function $f$ is \textit{supermodular} if it satisfies $f(B) - f(B\cup \{e\}) \geq  f(C) - f(C\cup\{e\})$. A typical problem in supermodular optimization is the \textit{cardinality-constrained set function} problem, and a greedy-based algorithm has emerged as a popular option for tackling it with a guaranteed approximation ratio of $(1-1/e)$~\cite{nemhauser1978analysis}.

\subsection{Graphs and Relevant Matrices}
Consider a connected undirected graph $\calG = (V,E)$ where $V$ is the set of nodes and $E \subseteq V \times V$ is the set of edges. Let $n = |V|$ and $m = |E|$ denote the number of nodes and the number of edges, respectively. Let $\Bar{E}=(V\times V)\backslash E$ be the set of nonexistent edges in $\calG$ and $\bar{m}=\sizeof{\bar{E}}$ denote its size. The Laplacian matrix of $\calG$ is the symmetric matrix $\LL = \DD - \AA$, where $\AA$ is the adjacency matrix whose entry $\AA_{ij}=1$ if node $i$ and node $j$ are adjacent, and $\AA_{ij}=0$ otherwise, and $\DD$ is the degree diagonal matrix $\DD=\text{diag}(\dd_1,\cdots,\dd_n)$ where $\dd_i$ is the degree of node $i$. We fix an arbitrary orientation for all edges in $\calG$, then we can define the signed edge-node incidence matrix $\BB_{m\times n}$ of graph $\calG$, whose entries are defined as follows: $\BB_{e,u}= 1$ if node $u$ is the head of edge $e$, $\BB_{e,u}= -1$ if $u$ is
the tail of $e$, and $\BB_{e,u}= 0$ otherwise. For an oriented edge $e=(u,v)\in E$, we define $\bb_e = \bb_{uv} = \ee_{u}-\ee_{v}$, where $u$ and $v$ are head and tail of $e$, respectively. $\LL$ can be rewritten as $\LL = \sum\nolimits_{e\in E}\bb_e \bb_e^\top$. 
The matrix $\LL$ is singular and positive semi-definite and its Moore-Penrose pseudoinverse is $\LL^\dag = \kh{\LL +\frac{1}{n}\JJ}^{-1}-\frac{1}{n}\JJ$, where $\JJ$ is the matrix of appropriate dimensions with all entries being ones.

\subsection{Resistance Distance}
For an undirected connected network $\calG$, we could construct a corresponding electrical network, with each edge being replaced by a resistance of one ohm. Next, we introduce three quantities of resistance distance based on this electrical network.

\begin{definition}\label{def:resis_dis}
(Pairwise Resistance Distance~\cite{KlRa93}). For a graph $\calG = (V,E)$, {let $u,v\in V$ be two distinct nodes and $\bb_{uv}=\ee_u-\ee_v$}, the pairwise
resistance distance between $u$ and $v$ is defined as 
$\mathcal{R}_{uv}=\bb_{uv}^{\top} \LL^{\dagger} \bb_{uv}.$
\end{definition}

\begin{definition}\label{def:resis_dis2}
(Node Resistance Distance~\cite{BOZZOresistance2013}). For graph $\calG = (V,E)$, the resistance distance $\mathcal{R}_v$ of node $v$ is defined as the pairwise resistance distances between node $v$ and all nodes in $V$, i.e., 
$\mathcal{R}_v = \sum_{u \in V} \mathcal{R}_{uv} = n \LL^\dag_{vv} + \ldtrace{\LL^\dag}.$
\end{definition}

\begin{definition}
\label{def:kir}
(Kirchhoff index~\cite{LiZh18}). The Kirchhoff index $\calK(\calG)$ of a graph $\calG$ is defined as the sum of resistance distances over all node pairs, which is a measure of the overall connectedness of a network; namely,
$
\calK(\calG)=\sum_{\substack{u, v \in V \\u<v}} \mathcal{R}_{uv}=n \ldtrace{\LL^{\dag}}.  
$
\end{definition}

{\noindent\textbf{Intuition.} Lower resistance distance between nodes implies easier communication and exchange of information. This measure has particularly gained significant popularity since it is capable of encoding complex local and global structure which impact information spread. Two complementary views make this concrete. 
\textit{Electrical view:} injecting one unit of current at $u$ and removing it at $v$ yields a voltage drop equal to $\mathcal{R}_{uv}$. 
\textit{Random-walk view:} the commute time satisfies $\mathrm{CommuteTime}(u,v)=2m\,\mathcal{R}_{uv}$, so smaller resistance means easier mutual reachability and faster diffusion. Please refer to~\cite{ChRaRu89,stephenson1989rethinking,brandesCentralityMeasuresBased2005} for more details.}
\subsection{Convex Hull}\label{sec:convex}
The convex hull is the smallest convex polygon enclosing a set of points, which plays a significant role in computational geometry~\cite{ChSk04}, computer vision~\cite{yang2013graph}, machine learning~\cite{fawcett2007pav} and so on.

\begin{definition}
(Convex Hull~\cite{Ro70}) 
Given a set of $n$ points $P=\{\vvv_1,\vvv_2,\cdots,\vvv_n\}$ in $\mathbb{R}^d$, its convex hull $C(P)$ is the (unique) minimal convex polytope containing $P$, providing the tightest convex boundary encompassing these points. Let $\bar{P}$ denote the node set of $C(P)$.
\end{definition}

One important characteristic of a convex hull is the \textit{diameter} property which asserts that the maximum distance between any two points within $P$, i.e., the diameter of $P$, is equal to the maximum distance between two points in $\Bar{P}$. In other words, let $D_P= \max_{i,j\in P}\|\vvv_i-\vvv_j\|$, and $D_{\bar{P}}= \max_{i,j\in \bar{P}}\|\vvv_i-\vvv_j\|$ denote the diameter of $P$ and $\bar{P}$ respectively, then $D_P=D_{\bar{P}}$. 

\section{PROBLEM FORMULATION}
\label{problem-form}

\subsection{Definitions}
In social networks, different groups are often found to have unequal access to information due to a range of factors such as socio-economic status, demographics, or other factors that shape their position within a network. This issue has been explained and highlighted in prior work~\cite{bashardoust2022reducing,jalali2020information}. As the first step to address this issue, we formulate the information access of a group in Definition~\ref{def:1}.

\begin{definition}\label{def:1}
    ($I$, Group Information Access). For any { non-empty} group $S\subseteq V$ in $\calG$, we measure its capability of accessing information from the network using the average resistance distance, which is the sum of resistance distances of the nodes in $S$ divided by its size, that is 
    \begin{align}
     \notag I_{S} = \frac{\sum_{v\in S}\mathcal{R}_v}{\abs S} = \frac{n}{\abs S}\LL^\dag_{S}+ \ldtrace{\LL^\dag }.
    \end{align}
\end{definition}
The parameter $I_{S}$ measures the information access capability of a group $S$. The smaller the value of $I_{S}$, the better access the group has to the information disseminated in the network. {The rationale is that resistance distance reflects the ``effective proximity'' of a node to the rest of the graph by aggregating contributions from all paths. Averaging these scores over a group yields a stable and interpretable notion of how easily the group can exchange information with the network.}  We call a group advantaged if it has a lower $I$ and disadvantaged if it has a higher $I$. To simplify our analysis, we focus on the two-category case in this paper, where nodes are categorized as either advantaged or disadvantaged. Nevertheless, our algorithms can easily be extended to cover cases where nodes belong to more than two categories. It is noteworthy that in the real-world social networks, categorization into two groups based on factors such as gender is common.
\begin{definition}\label{def:2}
    ($U$, Information Access Unfairness). For an advantaged group $S$ and a disadvantaged group $T$ in $\calG$, the extent of unfairness in information access is defined as
    \begin{align}
        \notag U_{\calG} = I_{T} - I_{S}. 
    \end{align}
\end{definition}
The intuition behind the definition of $U$ is to measure to what extent the network's structure enables equitable access to information among different groups. A value closer to 0 indicates a fairer network, whereas a larger value indicates more significant disparity between groups. Specifically, when $U_{\calG}=0$, the network satisfies { group equality~\cite{tsioutsiouliklis2022link}}.

\begin{definition}\label{def:3}
    ($R$, Graph Resistance).
We define the resistance of a graph $\calG=(V,E)$ to be
    \begin{align}
     \notag R_{\calG}=\frac{I_{V}}{2} = \frac{\sum_{v\in V}\mathcal{R}_v}{2n} = \ldtrace{\LL^\dag }.
    \end{align}
\end{definition}
The graph resistance $R$ is closely related to the Kirchhoff index in Definition~\ref{def:kir}.

\noindent\textbf{Remark.} As we mentioned, graph resistance and efficient information dissemination have inverse relation. Thus, we interchangeably use the terminologies of \textit{minimizing graph resistance} and \textit{maximizing information access efficiency}.

\subsection{Fair Information Access: Problem Statement}

We write $I_{S}(\add)$ and $I_{T}(\add)$ to denote the resulting average resistance distance of group $S$ and $T$ respectively after adding the edges in $\add$, and $R_{\calG}(\add)$ to denote the resulting overall graph resistance.
We study the following objective function:
\begin{align}\label{eq:obj}
    F(\add)=(1-\lambda)R_{\calG}(\add)+\lambda \big(I_{S}(\add)^2+I_{T}(\add)^2\big)
\end{align}
where the hyperparameter $\lambda$ controls the trade-off between the minimization of graph resistance and unfairness. 

{\noindent\textbf{Rationale of the objective function.} Eq.~\eqref{eq:obj} {couples} a network–efficiency term with a group–aware term. 
Minimizing the effective resistance $R_{\calG}$ alone is group–agnostic and may perpetuate existing advantages, whereas selecting edges by group access alone can myopically inflate local access without meaningfully shortening network-wide paths, yielding diminishing \textit{global returns}. 
We therefore adopt a standard bi–criteria scalarization that (i) reduces global impedance ($R_{\calG}$) and (ii) steers access in a group–aware manner.

For the fairness part, we optimize the {sum of squared} group access scores, $I_S^2+I_T^2$, rather than the raw disparity $U:=I_T-I_S$ in Def.~IV.2. 
Directly optimizing $U$ is ill–behaved (it can be negative and is non–monotone), while $|U|$ is non–smooth and hinders large–scale optimization. 
In contrast, $I_S^2+I_T^2$ is nonnegative, smooth, and—critically—decomposes cleanly into edge–wise contributions, enabling scalable marginal estimation. 
This surrogate is also linked to disparity  (Cauchy–Schwarz):
for $I_S,I_T\ge 0$,
\(
|U|=|I_T-I_S|\ \le\ \sqrt{2}\,\sqrt{I_S^2+I_T^2},
\)
so lowering the squared sum tightens a direct bound on $|U|$. 
Intuitively, the first–order change 
$\mathrm{d}(I_S^2+I_T^2)=2I_S\,\mathrm{d}I_S+2I_T\,\mathrm{d}I_T$
weights improvements by the current group levels, which biases selections toward reducing disparity; in our experiments this aligns with edges that benefit the disadvantaged group more.
The above design is consistent with common practice in fairness optimization~\cite{anwar2021balanced,ali2019fairness}.
(When a third group $O$ exists, we extend the second term by adding $I_O^2$.)
}

For a connected undirected graph $\calG=(V,E)$, if we add a set $\add$ of nonexistent edges to $\calG$ forming a new graph $\calG(\add)=(V, E\cup\add)$, the objective function $F$ will decrease. We next prove this property. 
\begin{lemma}\label{lem:mono}
    Let $\calG=(V,E)$ be a connected graph and $e\notin E$ be a potential edge to be added. Let $S, T\subseteq V$ be two demographic groups. Define $\Delta(e)= F(\emptyset)-F(\{e\})$ as the decrease of $F$ upon the addition of edge $e$. Then $\Delta(e)>0$.
\end{lemma}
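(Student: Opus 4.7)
The plan is to invoke a Sherman--Morrison-type rank-one update for the pseudoinverse $\LL^\dag$ upon adding the edge $e=(u,v)$, and then show that every piece of $F$ either stays fixed or strictly decreases, with at least one strict decrease. Writing $\bb_e = \ee_u - \ee_v$, the new Laplacian is $\LL' = \LL + \bb_e \bb_e^\top$. Since $\bb_e \perp \one$ while $\ker(\LL) = \ker(\LL') = \mathrm{span}(\one)$, the usual rank-one inverse identity lifts to the pseudoinverse and gives
$$
(\LL')^\dag \;=\; \LL^\dag \;-\; \frac{\LL^\dag \bb_e \bb_e^\top \LL^\dag}{1 + \bb_e^\top \LL^\dag \bb_e},
$$
where the denominator $1 + \mathcal{R}_{uv}$ is strictly positive by connectedness and the subtracted term is rank-one and positive semidefinite.

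From this identity I would first handle the graph-resistance term: taking traces gives
$$
R_{\calG}(\emptyset) - R_{\calG}(\{e\}) \;=\; \frac{\|\LL^\dag \bb_e\|_2^2}{1+\mathcal{R}_{uv}},
$$
which is strictly positive because $\mathcal{R}_{uv} = \bb_e^\top \LL^\dag \bb_e > 0$ in a connected graph with $u\ne v$ forces $\LL^\dag \bb_e \ne \zeov$. Next, for each group $X \in \{S,T\}$ I would sum the diagonal entries indexed by $X$ on both sides of the update identity. The relevant diagonal entries of the correction matrix are $(\LL^\dag \bb_e)_i^2/(1+\mathcal{R}_{uv}) \ge 0$, so $(\LL')^\dag_X \le \LL^\dag_X$; combined with the strict decrease of $\mathrm{Tr}(\LL^\dag)$, Definition~\ref{def:1} then yields $I_X(\{e\}) < I_X(\emptyset)$. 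Since the underlying graph is connected with $n \ge 2$, every $\mathcal{R}_v > 0$, hence $I_S,I_T > 0$, and squaring preserves strict monotonicity, so $I_S(\emptyset)^2 + I_T(\emptyset)^2 > I_S(\{e\})^2 + I_T(\{e\})^2$.

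Assembling the two pieces, $F(\emptyset) - F(\{e\}) = (1-\lambda)\bigl[R_{\calG}(\emptyset) - R_{\calG}(\{e\})\bigr] + \lambda\bigl[I_S(\emptyset)^2 + I_T(\emptyset)^2 - I_S(\{e\})^2 - I_T(\{e\})^2\bigr]$; with $\lambda \in [0,1]$ both weights are nonnegative and both bracketed differences are strictly positive, giving $\Delta(e) > 0$ regardless of the value of $\lambda$. The only delicate point is the pseudoinverse update itself: one has to argue carefully on the orthogonal complement of $\one$ that the Sherman--Morrison identity applies to $\LL^\dag$ rather than to a genuine inverse. I would either cite a standard reference for this rank-one pseudoinverse update, or verify it in a short preliminary lemma by conjugating the identity with the projector $\II - \tfrac{1}{n}\JJ$ (using $\LL^\dag = (\LL + \tfrac{1}{n}\JJ)^{-1} - \tfrac{1}{n}\JJ$ from the preliminaries and the fact that $\bb_e \in \mathrm{span}(\one)^\perp$). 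Once that identity is in hand, the remaining monotonicity steps are essentially direct.
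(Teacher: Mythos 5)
Your proof is correct, but it takes a more explicit route than the paper. The paper disposes of this lemma in two lines: $F$ is assembled from resistance distances with nonnegative coefficients, and by Rayleigh's monotonicity law adding an edge can only decrease resistance distances, hence $\Delta(e)>0$. You instead carry out the rank-one pseudoinverse update $(\LL')^\dag = \LL^\dag - \LL^\dag\bb_e\bb_e^\top\LL^\dag/(1+\bb_e^\top\LL^\dag\bb_e)$ (which is legitimate here, and is in fact the same identity the paper uses in Line~6 of Algorithm~\ref{alg:greedy}) and read off the decrements term by term. What your version buys is rigor on the two points the paper's one-liner glosses over: (i) \emph{strictness} --- Rayleigh's law only gives weak monotonicity of each pairwise resistance (adding an edge can leave some $\mathcal{R}_{uv}$ unchanged in symmetric configurations), whereas you pin the strict decrease on the trace term via $\|\LL^\dag\bb_e\|^2/(1+\mathcal{R}_{uv})>0$, which then propagates into every $I_X$; and (ii) the fact that the \emph{squared} terms $I_S^2+I_T^2$ decrease, which requires $I_S,I_T\ge 0$ and is not literally covered by ``nonnegative linear combination of resistance distances.'' The paper's argument is shorter and conceptually cleaner; yours is self-contained and airtight. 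One cosmetic point: in your final assembly you should note that $(1-\lambda)$ and $\lambda$ cannot both vanish (they sum to $1$), since ``both weights are nonnegative and both differences are positive'' alone would not exclude a zero total.
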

\begin{proof} 
Note that $F$ is defined as a non-negative linear combination of resistance distance. By Rayleigh’s monotonicity law~\cite{doyle_snell_1984}, adding edges can only decrease resistance distance. Therefore,  we conclude that $\Delta(e)>0$.
\end{proof}

Lemma~\ref{lem:mono} indicates that the addition of any non-existing edge will lead to a decrease of the objective function $F$. Then, a question naturally arises: How can we find the optimal edge set $\add$ with $k$ edges such that $F$ is minimized? More precisely, the fair information access maximization problem can be stated as follows.

\begin{problem}\label{pro:1}(\uline{F}air \uline{I}nformation \uline{A}ccess \uline{M}aximization, \MFI). Given a connected undirected graph $\calG=(V,E)$, two demographic groups $S,T\subseteq V$, a fixed integer $k$, we aim to select the edge set $\add\subseteq \bar{E}$ with $\sizeof{\add}=k$, and add these chosen edges to graph $\calG$ forming a new graph $\calG(\add)=(V, E\cup \add)$, so that $F$ is minimized. In the other words:
\begin{equation}\label{pro1}
\add = \argmin_{H \subseteq \bar{E}, |H|=k} F(H).
\end{equation}
\end{problem}

\subsection{Hardness and Non-Supermodularity}
In this section, we argue that Problem~\ref{pro:1} is NP-hard. In~\cite{kooij2023minimizing}, the authors proved that the problem of minimizing the Kirchhoff index of a graph by adding $k$ edges is NP-hard. This proof relies on a reduction from the $3$-colorability problem~\cite{garey1974some}. Our problem is at least as hard as theirs.
Note that by setting $\lambda=0$ in Eq.~\eqref{eq:obj}, the quantity that we optimize differs from the Kirchhoff index only by a multiplicative factor, according to Definitions~\ref{def:kir} and~\ref{def:3}. Therefore, we can conclude that the \MFI problem is NP-hard. This implies that we cannot hope for a polynomial time algorithm. Furthermore, the objective function in Problem~\ref{pro:1} is not supermodular, as stated below.
\begin{figure}[t]
    \centering
    \includegraphics[width=0.7\columnwidth]{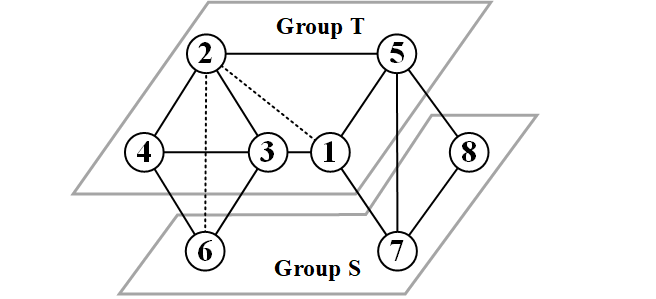}
    \caption{A $8$-node toy network to show the non-supermodularity of the objective function.}
    \label{fig:nonsuper}
\end{figure}
\begin{theorem}\label{the:super}
    Function $F(\cdot)$ is not supermodular.
\end{theorem}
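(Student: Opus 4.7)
The plan is to establish non-supermodularity by exhibiting an explicit counterexample, since the negation of the universal statement ``for all $B \subsetneq C \subseteq \bar{E}$ and all $e \in \bar{E} \setminus C$, $F(B) - F(B \cup \{e\}) \geq F(C) - F(C \cup \{e\})$'' requires only one triple $(B, C, e)$ on which the marginal-gain inequality is reversed. The toy network in Figure~\ref{fig:nonsuper} is almost certainly the instance the authors intend to use, so I would fix this $8$-node graph, designate the advantaged group $S$ and disadvantaged group $T$ according to its labelling, and pick a concrete value of $\lambda \in (0,1)$ (for robustness, I would verify the reversal persists over a small range of $\lambda$, and also check the pure-fairness case $\lambda = 1$).

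Next, I would identify a candidate triple $(B, C, e)$ with $B \subsetneq C \subseteq \bar{E}$ and $e \notin C$, and directly evaluate $F$ at each of the four configurations $B,\, B \cup \{e\},\, C,\, C \cup \{e\}$. Each evaluation reduces to computing the Laplacian pseudoinverse $\LL^\dag$ of the modified graph and plugging the relevant diagonal entries into
\begin{equation*}
F(\add) = (1-\lambda)\, \ldtrace{\LL^\dag} + \lambda \Big( \big(\tfrac{n}{|S|}\LL^\dag_S + \ldtrace{\LL^\dag}\big)^2 + \big(\tfrac{n}{|T|}\LL^\dag_T + \ldtrace{\LL^\dag}\big)^2 \Big),
\end{equation*}
via Definitions~\ref{def:1}--\ref{def:3}. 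The four pseudoinverses can all be obtained from the baseline $\LL^\dag$ via successive rank-one Sherman--Morrison updates along the incidence vectors $\bb_e$, which makes the calculation a short sequence of inner products rather than four from-scratch inversions.

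The main obstacle, and also the conceptual source of non-supermodularity, is that the group-access terms enter $F$ \emph{quadratically}. If $F$ consisted only of $R_\calG$, Rayleigh monotonicity together with the well-known supermodularity of effective resistance under edge addition would give supermodularity of $F$. But squaring $I_S$ and $I_T$ introduces cross terms of the form $2 I_S \cdot \mathrm{d}I_S$, so the marginal drop of $I_S^2$ at a configuration is amplified by the current level $I_S$ itself. This \emph{level-dependent amplification} is exactly what can reverse the inequality: by choosing $e$ and the groups so that adding $e$ to the larger set $C$ produces a bigger reduction in $I_T^2$ (because $I_T$ remains relatively large at $C$ while $B$ has already been configured to reduce $I_T$ less steeply), the quadratic contribution dominates the supermodular $R_\calG$ contribution.

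Concretely, I would seek $B, C, e$ in which $e$ is an edge incident to a high-$\mathcal{R}_v$ node of $T$, $C \setminus B$ consists of edges that raise rather than lower the leverage of $e$ on $T$, and then numerically verify the strict inequality $F(B) - F(B \cup \{e\}) < F(C) - F(C \cup \{e\})$. A single successful triple, reported as a table of the four $F$-values with the strict reversal highlighted, suffices to complete the proof. The principal risk is purely computational---picking a triple that actually exhibits the reversal---which a brief search over $B, C, e$ on the $8$-node graph in Figure~\ref{fig:nonsuper} should resolve.
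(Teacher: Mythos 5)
Your approach is essentially the paper's: the authors prove Theorem~\ref{the:super} by exhibiting exactly the counterexample you anticipate on the $8$-node graph of Fig.~\ref{fig:nonsuper}, taking $B=\emptyset$, $C=\{(1,2)\}$, $e=(2,6)$ and reporting $F(B)=64.58$, $F(B\cup\{e\})=56.45$, $F(C)=56.88$, $F(C\cup\{e\})=48.70$, so that $F(B)-F(B\cup\{e\})=8.13 < F(C)-F(C\cup\{e\})=8.18$. Two caveats. First, your write-up stops at a search plan rather than a verified triple; for a counterexample-based proof the four numerical values are the entire content, so the proof is not complete until you actually exhibit them (and note that the reversal in the paper's instance is tiny, $8.13$ versus $8.18$, so a sloppy search or loose numerics could easily miss it). Second, your side-claim that the $R_{\calG}$ term alone would be supermodular ``by Rayleigh monotonicity together with the well-known supermodularity of effective resistance under edge addition'' is not established and is in fact false in general: the Kirchhoff index as a set function of added edges is known not to be supermodular (this is precisely why the related work on minimizing the Kirchhoff index by edge addition, e.g.~\cite{ZhAhZh25,kooij2023minimizing}, cannot invoke the standard $(1-1/e)$ greedy guarantee). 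So the quadratic fairness terms are not the sole ``conceptual source'' of non-supermodularity; the paper's counterexample makes no such attribution and simply evaluates $F$ directly.
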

\begin{proof}
    To illustrate the non-supermodularity, consider the network with $8$ nodes in Fig.~\ref*{fig:nonsuper}, where the solid lines represent the existing edges and the dashed lines represent the edges to be added. Define $B=\emptyset$, $C=\{(1,2)\}$ and $e=(2,6)$. Then, we have $F(B)= 64.58, F(B\cup\{e\}) = 56.45, F(C) = 56.88$, and $F(C\cup\{e\}) = 48.70$. Thus,
    $
    F(B) - F(B\cup \{e\}) =  8.13 \leq  F(C) - F(C\cup\{e\}) = 8.18.
    $
    This concludes the proof.
\end{proof}

\section{Simple Greedy Algorithm}
\label{simple-greedy}
The \MFI problem is combinatorial, and its optimal solution can be found using a brute-force approach. This involves computing \(F(\add)\) for each set \(\add\) of the \({\bar{m}\choose k}\) candidates on the augmented graph \(\calG(\add)\), requiring \(\Theta(n^3)\) time to invert the Laplacian matrix. The method outputs the subset \(\add^{*}\) of \(k\) edges that maximally decreases \(F\). However, this is computationally infeasible due to its time complexity of \(\Theta\left({\bar{m}\choose k}n^3\right)\), which grows exponentially with \(k\).

{To address the exponential complexity, we observe that the objective $F$ decreases most when we add the non-edge whose \textit{one-step} improvement
$\Delta(e)$ is largest. This motivates a discrete analogue of steepest descent: directly optimize the immediate decrease in $F$. 
Accordingly, we propose a simple greedy algorithm, \Greedy, outlined in Algorithm~\ref{alg:greedy}.} (It is worth emphasizing that, despite its name, this algorithm does not compute an exact solution, but it's called \Greedy since we use further approximation to speed up this algorithm in next sections.) Initially, we compute the inversion of the Laplacian matrix \(\LL\) in \(\mathcal{O}(n^3)\) time and set the edge set \(\add\) to empty. If \(\LL^\dag\) is known, adding an edge can be treated as a rank-1 update, which can be done in \(\mathcal{O}(n^2)\) time using the Sherman-Morrison formula~\cite{Me73}. The algorithm proceeds in \(k\) rounds, each involving two operations: computing \(\Delta(e)\) in \(\mathcal{O}(\Bar{m}n)\) time and updating \(\LL^{\dag}\) in \(\mathcal{O}(n^2)\) time. In each iteration, edge \(e_i\) is chosen to minimize the objective function \(F\), terminating after \(k\) edges are selected. The overall running time is \(\mathcal{O}(n^3 + k\Bar{m}n)\).

\normalem
\begin{algorithm}[h]
	\caption{$\Greedy(\calG, k)$}\label{alg:greedy}
	\Input{
		A graph $\calG=(V,E)$; an integer $0<k \leq \bar{m}$
	}
	\Output{
		A subset of $\add \subseteq \bar{E}$ and $|\add| = k$
	}
	Compute $\LL^{\dag}$;
	$\add \gets \emptyset$ \;
	\For{$i = 1$ to $k$}{
		Select $e_i$ s.t. $e_i \leftarrow \argmax_{e \in \Bar{E}\backslash \add} \Delta(e)$\;
		Update solution $\add \gets \add \cup \{e_i\}$\;
		Update $\LL^\dag \gets \LL^\dag- \frac{\LL^\dag \bb_{e_i} \bb_{e_i}^\top \LL^\dag}{1 +\bb_{e_i}^\top \LL^\dag \bb_{e_i}}$\;
        Update the graph $\calG\gets (V, E\cup\{e_i\})$
	}
	
    \Return $\add$ \;
\end{algorithm}

\section{FAST GREEDY ALGORITHM}
\label{fast-greedy}
While \Greedy is faster than the naïve brute-force approach, its time complexity remains suboptimal due to computation challenges such as: 1) computing the pseudo-inverse of \(\LL\); 2) traversing nearly quadratic unconnected node pairs in real-world networks; and 3) updating \(\LL^{\dag}\) after each edge selection.
In this section, we propose a novel, faster algorithm for solving \MFI, utilizing pivotal elements that we will introduce and analyze in the following subsections.

\subsection{High-level Ideas}
Our approach works with a two-stage pipeline that turns ``best marginal'' into a fast distance search while keeping explicit accuracy guarantees. 

\noindent\textbf{Transforming decrement search to distance search (Section~\ref{sec:gradient}).}
While \Greedy\ selects the edge with the maximum marginal decrease, prior work (e.g., \cite{kang2019n2n,siami2017centrality,YiSh2018,wong2015efficiency}) has used gradients to assess an edge’s influence on an objective. Motivated by this, we use the first-order derivative of $F$ with respect to the weight of $e$ as a surrogate for the true marginal decrease $\Delta(e)$. Although this surrogate may not perfectly preserve the ranking of all edges, it enables a much faster algorithm and, empirically, strongly correlates with the true marginals.

We view $\calG$ as a weighted complete graph and treat $F$ as a multivariate function. For each non-edge $e\in (V\times V)\setminus E$, we compute the derivative to obtain the surrogate $\bar\Delta(e)$ (Lemma~\ref{lem:delta_bar}). This quantity admits a geometric interpretation: it equals the squared Euclidean distance between the high-dimensional node embeddings with coordinates $\cc_i$ defined in Eq.~\eqref{eq:zuobiao}. Consequently, Problem~\ref{pro:1} reduces to selecting the farthest pair in the point set $P=\{\cc_i\}_{i\in V}$ (Problem~\ref{pro:2}).
\normalem
\begin{algorithm}[t]
	\caption{$\Fast(\calG, \epsilon, k)$}\label{alg:fast}
	\Input{
		A connected graph $\calG$; an error parameter $\epsilon\in(0,1)$; an integer $0<k\leq \bar{m}$
	}
	\Output{
        A subset of $\add \subseteq \bar{E}$ and $|\add| = k$
	}
        Initialize solution $\add=\emptyset$\;
        \For{$b=1$ to $k$}{
        $(i, j)\gets \Farthest(\calG, \epsilon)$\;
        Update solution $\add\gets \add\cup \{(i, j)\}$\;
        Update the graph $\calG\gets\calG(V,E\cup\{(i, j)\})$\;
        }
    \Return $\add$\;
\end{algorithm}

\noindent\textbf{Fast convex hull approximation (Section~\ref{sec:convex2}).}
Because farthest pairs lie on the convex hull (Sec.~\ref{sec:convex}), we can restrict comparisons to hull vertices instead of scanning all $O(n^2)$ pairs. Computing the exact hull in high dimensions is prohibitive, so we use an $c$-node \emph{approximate} convex-hull routine with accuracy guarantees, and further sketch the coordinates to a low dimension to preserve pairwise distances up to a small multiplicative error. We evaluate the required coordinates via a fast Laplacian solver. Putting these pieces together, our \Farthest algorithm (Algorithm~\ref{alg:furthest}) returns a pair whose distance is within a small multiplicative factor of the true diameter of $P$.

\begin{figure}[t]
    \centering
    \includegraphics[width=0.8\columnwidth]{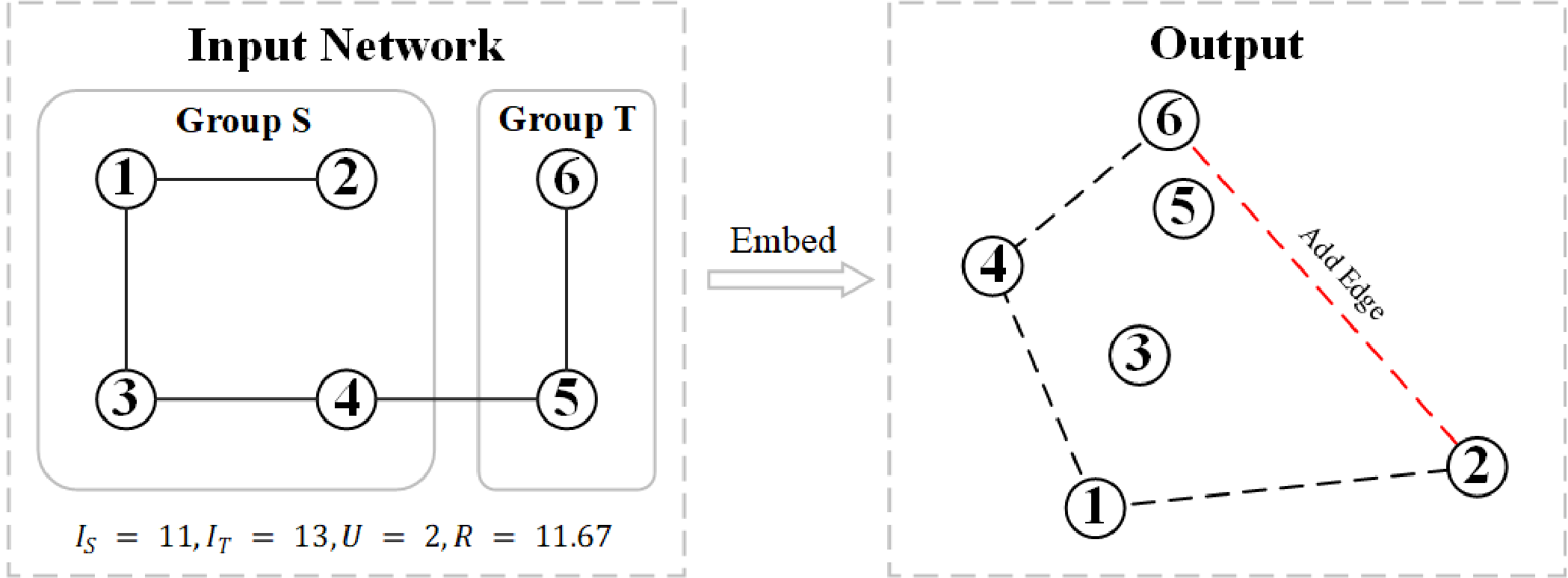}
    \vspace{-0.2cm}
    \caption{Running example on a toy graph. Embeddings are visualized in 2D for clarity. }
    \label{fig:pipeline}
\end{figure}

\noindent\textbf{The \Fast algorithm.} By applying \Farthest to iteratively find the best local unconnected node pair whose connection could maximally decrease the objective function, we propose a fast
greedy algorithm called \Fast (Algorithm~\ref{alg:fast}).

To make the pipeline concrete, we show a toy instance in Fig.~\ref{fig:pipeline} where the node embeddings are visualized in two dimensions.
From the surrogate view, the best edge corresponds to the farthest pair in the embedded set.
As illustrated, we first compute the embeddings, then restrict attention to the (approximate) convex hull (dashed), and finally select the farthest pair—nodes $(2,6)$—and add that edge to the graph.
This single picture mirrors the main steps of our method.

\subsection{Gradient of Marginal Decrease} 
\label{sec:gradient}
To compute the gradient, we consider the graph \(\calG\) as a complete graph, where nonexistent edges have a weight of 0 and existing edges have a weight of 1. Let \(\tilde{m} = \tbinom{n}{2}\), \(\ww_0 \in [0,1]^{\tilde{m}}\) be the original graph's edge weights, and \(\ww \in [0,1]^{\tilde{m}}\) be the augmented graph's edge weights. We can rewrite Problem~\ref{pro:1} by taking \(F\) as a multivariate function of vector \(\ww\):
\begin{align}
    \label{pro1_vec}
\add = \argmin_{\ww\in [0,1]^{\tilde{m}},\ww-\ww_0\in\{0,1\}^{\tilde{m}}, \mathbf{1}^{\top}(\ww-\ww_0)=k} F(\ww).
\end{align}

{\begin{definition}[Gradient Surrogate]
\label{lem:delta_bar}
    Let $\alpha=1-\lambda+2\lambda\big(2\ldtrace{\LL^{\dag}}+\tfrac{n}{\abs S}\LL^{\dag}_{S}+\tfrac{n}{\abs T}\LL^{\dag}_{T}\big)$, $\beta=2\lambda \tfrac{n}{\abs S}\big(\ldtrace{\LL^{\dag}}+$ $\tfrac{n}{\abs S}\LL^{\dag}_{S}\big)$, and $\gamma=2\lambda \tfrac{n}{\abs T}\big(\ldtrace{\LL^{\dag}}+\tfrac{n}{\abs T}\LL^{\dag}_{T}\big)$. Then we can derive an surrogate of $\Delta(e)$, denoted by $\Bar{\Delta}(e)$, with
\begin{align}
     \Bar{\Delta}(e)=|\tfrac{\partial{F}}{\partial{\ww_i}}|=\alpha\|\LL^{\dag}\bb_e\|^2+\beta\|\EE^{S}\LL^{\dag}\bb_e\|^2+\gamma\|\EE^{T}\LL^{\dag}\bb_e\|^2\notag.
\end{align}
\end{definition}}

We could find that $\bar{\Delta}(e)$ is a square of Euclidean distance between nodes $i,j$ in a $2n$-dimensional space where the coordinate $\cc_i$ of a node $i\in V$ is as follows:
\begin{align}
\label{eq:zuobiao}
    \notag \cc_i=&\Big(\sqrt{\alpha}\ee_i^{\top}\LL^{\dag}, \underbrace{\sqrt{\beta}\ee_{v_{1}}^{\top}\LL^{\dag}\ee_i, \cdots, \sqrt{\beta}\ee_{v_{\abs{S}}}^{\top}\LL^{\dag}\ee_i}_{v_{j}\in S},\\
    &\underbrace{\sqrt{\gamma}\ee_{u_{1}}^{\top}\LL^{\dag}\ee_i, \cdots, \sqrt{\gamma}\ee_{u_{\abs{T}}}^{\top}\LL^{\dag}\ee_i}_{u_{j}\in T}\Big)^{\top}.
\end{align}

Finding an edge whose addition maximally decreases the objective function \( F \) simplifies to identifying two nodes with maximum Euclidean distance based on their coordinates. This can be formally defined as follows.

\begin{problem}\label{pro:2}
Given a graph \(\calG=(V,E)\) with \(n\) nodes and a set of \(n\) points \(P=\{\cc_1,\cc_2,\cdots,\cc_n\}\) in \(\mathbb{R}^{2n}\), where \(\cc_i\) is the coordinate of node \(i\), find two points in \(P\) with the maximum Euclidean distance.
\end{problem}

\noindent\textbf{$\Gradient$ Algorithm.} To address this, we propose the \(\Gradient\) algorithm, which substitutes the marginal decrease \(\Delta(e)\) in Line 4 of Algorithm~\ref{alg:greedy} with the distance square \(\bar{\Delta}(e)\). However, iterating over all unconnected node pairs to find the maximum distance is computationally inefficient for sparse networks, necessitating a more efficient method to identify the furthest node pair.

\subsection{\Farthest: Diameter Search on an Approximate Hull}\label{sec:convex2}

To overcome the prohibitive cost of a brute-force search for the furthest node pair, we propose the \Farthest algorithm. The core strategy is to tackle the geometric equivalent of the problem—finding the diameter of the point set—through a series of carefully chosen approximations centered around the efficient estimation of the convex hull.

A key insight from geometry is that the diameter of a point set is determined by a pair of points on its convex hull. However, computing the exact convex hull has a prohibitive time complexity of $\mathcal{O}(n^{\lfloor d/2\rfloor})$ in high dimensions, making it impractical. To circumvent this, we leverage the \textsc{ApproxCH} algorithm~\cite{AwKaZh20,kalantari2015characterization}. Instead of finding the exact hull, this algorithm efficiently outputs a small subset of points, $\hat{P}$, that approximates the true convex hull. This step is crucial because it dramatically reduces the search space. We formalize the effectiveness of this approach in the following lemma, which guarantees that the diameter of this small approximate set is a faithful approximation of the true diameter.

\begin{lemma}\label{lem:furthest}
Let $\epsilon\in (0,1)$, $P$ be the input point set, $\hat{P}=\textsc{ApproxCH}(P,\epsilon/25)$, and $c=|\hat{P}|$. Define $D_P= \max_{i,j\in P}\|\vvv_i-\vvv_j\|$ and $D_{\hat{P}}= \max_{i,j\in \hat{P}}\|\vvv_i-\vvv_j\|$ as the diameters of $P$ and $\hat{P}$ respectively. Then, $D_P^2 \stackrel{\epsilon/5}{\approx}D_{\hat{P}}^2$.
\end{lemma}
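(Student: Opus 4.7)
The plan is to establish the two-sided approximation by bounding $D_{\hat P}$ in terms of $D_P$ and then squaring, exploiting the fact that the \textsc{ApproxCH} routine returns a subset $\hat P \subseteq P$ that captures every point of $P$ up to an additive error proportional to $D_P$. One direction, $D_{\hat P}^2 \le D_P^2 \le (1+\epsilon/5)\,D_P^2$, is immediate: since $\hat P \subseteq P$, any pair realizing $D_{\hat P}$ is also a pair in $P$, and taking the maximum over a larger set cannot decrease it.

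The substantive direction is $D_{\hat P}^2 \ge (1-\epsilon/5)\,D_P^2$. Here I would invoke the guarantee that \textsc{ApproxCH} with accuracy parameter $\epsilon/25$ ensures that every point $p \in P$ has some $\hat p \in \hat P$ with $\|p-\hat p\| \le (\epsilon/25)\,D_P$; this is the standard form of the Kalantari-style triangle-algorithm approximation and is the only place where the precise $\epsilon/25$ enters. Let $p,q \in P$ attain the diameter, i.e.\ $\|p-q\| = D_P$, and let $\hat p, \hat q \in \hat P$ be their approximants. The triangle inequality gives
\begin{equation*}
D_P \;=\; \|p-q\| \;\le\; \|p-\hat p\| + \|\hat p - \hat q\| + \|\hat q - q\| \;\le\; \tfrac{2\epsilon}{25}\,D_P + D_{\hat P},
\end{equation*}
which rearranges to $D_{\hat P} \ge (1 - 2\epsilon/25)\,D_P$.

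It remains to square this inequality and verify the constant. We have $(1 - 2\epsilon/25)^2 = 1 - 4\epsilon/25 + 4\epsilon^2/625$, and since $\epsilon \in (0,1)$ the cross term is non-negative and $4\epsilon/25 \le 5\epsilon/25 = \epsilon/5$, so $(1 - 2\epsilon/25)^2 \ge 1 - \epsilon/5$. Hence $D_{\hat P}^2 \ge (1-\epsilon/5)\,D_P^2$, which together with the trivial upper bound gives the claimed $\epsilon/5$-approximation.

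The main obstacle I anticipate is not the calculation itself but pinning down the precise form of the \textsc{ApproxCH} guarantee so that the $\epsilon/25 \to \epsilon/5$ constant tracking is transparent: depending on whether the cited algorithm certifies proximity to $\hat P$ or merely to $\mathrm{conv}(\hat P)$, a small extra argument (using that the farthest point in a convex set is attained at an extreme point) may be needed to reduce to a bound on $\|p - \hat p\|$ with $\hat p \in \hat P$. The constant $25$ is chosen with slack so that after one triangle-inequality application and one squaring, the $\epsilon/5$ bound holds for all $\epsilon \in (0,1)$ without needing higher-order corrections.
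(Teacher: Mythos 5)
Your proof is correct and follows essentially the same route as the paper's: both use the \textsc{ApproxCH} proximity guarantee (applied to a diametral pair, which lies on the hull) together with the triangle inequality to get $D_{\hat P}\ge(1-2\epsilon/25)D_P$, and then absorb the squaring into the factor-of-two slack to reach $\epsilon/5$. Your handling of the approximants $\hat p,\hat q$ versus the diametral pair of $\hat P$ is in fact slightly more careful than the paper's, which identifies the two; the only algebraic difference is that you square the linear bound while the paper factors $D_P^2-D_{\hat P}^2$ as a difference of squares.
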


\noindent\textbf{Remark.} The parameter \(c=|\hat{P}|\) depends on the network structure and the membership of groups \(S\) and \(T\). While it could be as large as \(n\), experiments in Section~\ref{experiments} show it is usually quite small for real-world social networks.

While this significantly narrows down the candidate pairs, two computational challenges remain.
First, the coordinate vectors exist in a very high-dimensional space (up to $d=2n$), which makes subsequent geometric computations like finding the approximate convex hull prohibitively expensive. To address this, we employ the Johnson-Lindenstrauss (JL) Lemma~\cite{johnson1984extensions}. This powerful result guarantees that a set of high-dimensional vectors can be projected into a much lower-dimensional space using a random matrix, while preserving all pairwise Euclidean distances up to a small relative error $(1\pm\epsilon)$. Specifically, it allows us to reduce the dimension from $d$ to a much more manageable $q = \mathcal{O}(\log n / \epsilon^2)$, which depends only logarithmically on the number of nodes. This step is critical for making the overall algorithm scalable.

Second, the initial computation of the coordinate vectors is itself a major bottleneck, as it requires solving linear systems with the graph Laplacian $\LL$ and calculating specific entries from its pseudoinverse $\LL^{\dagger}$—operations that are prohibitive if performed exactly. We accelerate this significantly by using two specialized techniques: a nearly linear-time solver for SDDM systems~\cite{SpTe14,CoKyMiPaJaPeRaXu14,solver2023} to efficiently approximate solutions to the linear equations, and the \textsc{AppDiag} routine~\cite{li2019current} to rapidly approximate the diagonal entries of $\LL^{\dagger}$ needed for the distance metric's scaling coefficients. These tools together bypass slow operations like direct matrix inversion and enable the fast construction of our coordinate vectors.

The application of these techniques leads to the construction of our approximate terms. Based on the definitions of $\alpha,\beta$, and $\gamma$, we obtain the corresponding $\epsilon$-approximations, namely $\tilde{\alpha}, \tilde{\beta}$ and $\tilde{\gamma}$. We then define the matrices for the approximate, low-dimensional coordinates. Let $q$ be a positive integer, and $\delta\in(0,1)$ be an error parameter. Let $\QQ_{1}$, $\QQ_{2}$, and $\QQ_{3}$ be three $q\times n$ random $\pm 1 / \sqrt{q}$ matrices. Let $\Bar{\QQ}_1=\QQ_1\EE^V$, $\Bar{\QQ}_2=\QQ_2\EE^S$, and $\Bar{\QQ}_3=\QQ_3\EE^{T}$. The core matrices are then computed using the fast solver: $\ZZtil_1[i,:] =\textsc{Solve}\left(\LL,\Bar{\QQ}_1[i,:]^{\top}, \delta\right)^{\top}$, $\ZZtil_2[i,:] =\textsc{Solve}\left(\LL,\Bar{\QQ}_2[i,:]^{\top}, \delta\right)^{\top}$, and $\ZZtil_3[i,:] =\textsc{Solve}\left(\LL,\Bar{\QQ}_3[i,:]^{\top}, \delta\right)^{\top}$. From these components, we construct the final approximate coordinate vector $\tilde{\cc}_i$ for node $i$:
\begin{align}
    \tilde{\cc}_i=\begin{pmatrix}\sqrt{\Tilde{\alpha}}\ZZtil_1^\top[i,:] & \sqrt{\Tilde{\beta}}\ZZtil_2^\top[i,:] &\sqrt{\Tilde{\gamma}}\ZZtil_3^\top[i,:]\end{pmatrix}^\top.
\end{align}

\noindent\textbf{Distance approximation.} Lemma~\ref{lem:app_delta} shows that the distance square $\Bar{\Delta}$ can be effectively approximated by our methods. As a preamble, we first present the following lemma, which is a key technical contribution of our work.

\begin{lemma}\label{lem:error2}
Given a graph $\mathcal{G}=(V, E)$ with $n$ nodes and an error parameter $\epsilon \in(0,1)$. Let $X \subseteq V$, $\QQ_{q \times n}$ be a random $\pm 1 / \sqrt{q}$ matrix where $q=\left\lceil\frac{24\log n}{(3\epsilon/125)^2}\right\rceil$, $\Bar{\QQ}=\QQ\EE^X$, $\ZZ_{q \times n} = \QQ \EE^X \LL^{\dagger}$, then for any node pair $e=(u,v)$:
	\[
	\big(1-\frac{3\eps}{125}\big) \|\EE^X\LL^{\dagger}\bb_{e}\|^2
	\leq
	\|\ZZ \bb_{e}\|^{2}
	\leq
	\big(1+\frac{3\eps}{125}\big) \|\EE^X\LL^{\dagger}\bb_{e}\|^2.
	\]
Let $\ZZtil[i,:] = \textsc{Solve}\left(\LL,\Bar{\QQ}[i,:]^{\top}, \delta\right)^{\top}$, $\zz_i$ be the $i$-th row of $\ZZ$, and let $\tilde{\zz}_i$ be an approximation of $\zz_i$ for $i \in \{1,2,...,q\}$, we have:
    \(
 	\|\zz_i-\tilde{\zz}_i\|_{\LL}\le\delta
	\norm{\zz_i}_{\LL},
	\)
	where
	$
	\delta \leq  \frac{\epsilon }{125}
	\sqrt{\frac{6(1-3\eps/125)}{n^{5}(1+3\eps/125)}}.
	$
Then, for every node pair $e=(u,v)$, we have:
    \begin{align}
    \label{EE201}
    \notag\big(1-\frac{\eps}{20}\big)\|\EE^{X}\LL^{\dag}\bb_e\|^2
    \leq 
    \|\ZZtil \bb_e\|^2 
    \leq 
    \big(1 + \frac{\eps}{20}\big) \|\EE^{X}\LL^{\dag}\bb_e\|^2,
    \end{align}
where $\ZZtil^\top = [\tilde{\zz}_1, \tilde{\zz}_2, ..., \tilde{\zz}_q]$.
\end{lemma}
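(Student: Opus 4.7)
The lemma has two chained statements, which I will prove in order. The first is a Johnson--Lindenstrauss statement for the fixed vectors $\yy_e:=\EE^X\LL^{\dag}\bb_e$; the second propagates the $\LL$-norm error of the Laplacian solver onto the squared Euclidean distance $\|\ZZtil\bb_e\|^{2}$ and composes the two relative errors via Fact~\ref{fac:1}.

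\textbf{Part one: the Rademacher sketch.} The key observation is that $\ZZ\bb_e=\QQ\EE^X\LL^{\dag}\bb_e=\QQ\yy_e$, where $\yy_e$ is determined by the (fixed) graph and $e$. The distributional Johnson--Lindenstrauss lemma for $\pm 1/\sqrt{q}$ Rademacher matrices gives, for any fixed $\yy_e$,
\[
\Pr\!\left[\,\big|\,\|\QQ\yy_e\|^{2}-\|\yy_e\|^{2}\big|>\tfrac{3\epsilon}{125}\|\yy_e\|^{2}\right]\le 2\exp\!\left(-\tfrac{q(3\epsilon/125)^{2}}{24}\right).
\]
With $q=\lceil 24\log n/(3\epsilon/125)^{2}\rceil$ the per-pair failure probability is at most $2n^{-24}$, so a union bound over the $\binom{n}{2}$ target vectors preserves the bound with only polynomially small failure probability, giving the first inequality.

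\textbf{Part two: propagating the solver error.} Write $a_i:=\bb_e^{\top}\zz_i$ and $\tilde a_i:=\bb_e^{\top}\tilde\zz_i$, so $\|\ZZ\bb_e\|^{2}=\sum_i a_i^{2}$ and $\|\ZZtil\bb_e\|^{2}=\sum_i\tilde a_i^{2}$. Since $\bb_e=\ee_u-\ee_v$ lies in the range of $\LL$, Cauchy--Schwarz in the $\LL^{\dag}/\LL$ pairing gives
\[
|a_i-\tilde a_i|\le\|\bb_e\|_{\LL^{\dag}}\,\|\zz_i-\tilde\zz_i\|_{\LL}\le\delta\,\|\bb_e\|_{\LL^{\dag}}\,\|\zz_i\|_{\LL},
\]
using the solver guarantee. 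I will then bound $\|\bb_e\|_{\LL^{\dag}}^{2}=\mathcal{R}_{uv}\le n$ and $\sum_{i=1}^{q}\|\zz_i\|_{\LL}^{2}=\mathrm{tr}(\bar\QQ^{\top}\LL^{\dag}\bar\QQ)\le\|\LL^{\dag}\|_{2}\,\|\bar\QQ\|_{F}^{2}$ with the Rademacher estimate $\|\bar\QQ\|_{F}^{2}\le n$ and the standard polynomial bound $\|\LL^{\dag}\|_{2}=\mathcal{O}(n^{3})$ for a connected graph, yielding $\sum_i(a_i-\tilde a_i)^{2}\le\delta^{2}\cdot\mathcal{O}(n^{5})$. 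The elementary identity
\[
\Big|\,\textstyle\sum_i\tilde a_i^{2}-\sum_i a_i^{2}\,\Big|\le 2\sqrt{\textstyle\sum_i(a_i-\tilde a_i)^{2}}\sqrt{\textstyle\sum_i a_i^{2}}+\textstyle\sum_i(a_i-\tilde a_i)^{2},
\]
(from Cauchy--Schwarz on $\sum_i(\tilde a_i-a_i)(\tilde a_i+a_i)$) combined with the lower bound $\sum_i a_i^{2}\ge(1-3\epsilon/125)\,\|\EE^X\LL^{\dag}\bb_e\|^{2}$ from part one, and the hypothesized $\delta\le(\epsilon/125)\sqrt{6(1-3\epsilon/125)/(n^{5}(1+3\epsilon/125))}$, rearranges into a $3\epsilon/125$ relative error between $\|\ZZtil\bb_e\|^{2}$ and $\|\ZZ\bb_e\|^{2}$. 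Two applications of Fact~\ref{fac:1} then chain this with part one into a total $6\epsilon/125\le\epsilon/20$ error, as required.

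\textbf{Main obstacle.} The delicate step is the tight tracking of polynomial $n$-factors, since $\|\EE^X\LL^{\dag}\bb_e\|^{2}$ can be as small as $1/\mathrm{poly}(n)$ and the prescribed $\delta=\Theta(\epsilon\,n^{-5/2})$ leaves essentially no slack in the Cauchy--Schwarz propagation. Either the crude $\|\LL^{\dag}\|_{2}\le\mathcal{O}(n^{3})$ estimate combined with the Frobenius bound on $\bar\QQ$, or a slightly sharper trace argument of the form $\sum_i\|\zz_i\|_{\LL}^{2}\le |X|\,\mathrm{tr}(\LL^{\dag})$, should exactly match the advertised $n^{5}$ factor; the remainder is bookkeeping with Cauchy--Schwarz and Fact~\ref{fac:1}.
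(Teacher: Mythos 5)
Your overall architecture matches the paper's: Johnson--Lindenstrauss for the first inequality, then propagation of the solver's $\LL$-norm error into $\|\ZZtil\bb_e\|^2$, then composition of the two relative errors. Your replacement of the paper's path argument by the duality estimate $|\bb_e^{\top}(\zz_i-\tilde{\zz}_i)|\le\norm{\bb_e}_{\LL^{\dag}}\norm{\zz_i-\tilde{\zz}_i}_{\LL}$ together with $\mathcal{R}_{uv}\le n$ is a legitimate and arguably cleaner alternative: the paper instead bounds $\|(\ZZ-\ZZtil)\bb_e\|$ by the triangle inequality over a $u$--$v$ path $P_{uv}$ and applies Cauchy--Schwarz over its at most $n$ edges, which yields exactly the same intermediate bound $\|(\ZZ-\ZZtil)\bb_e\|^2\le n\sum_{i}\norm{\zz_i-\tilde{\zz}_i}_{\LL}^2\le n\delta^2\sum_i\norm{\zz_i}_{\LL}^2$.

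The quantitative accounting, however, has a genuine gap, and because $\delta$ is prescribed with an exact $n^{5}$ there is no slack to absorb it. First, both of your proposed bounds on $\sum_i\norm{\zz_i}_{\LL}^2$ are a factor of $n$ too loose: $\norm{\LL^{\dag}}_2\,\|\Bar{\QQ}\|_F^2=O(n^3)\cdot n=O(n^4)$, and likewise $|X|\,\trace{\LL^{\dag}}=O(n^4)$. The paper instead observes $\sum_i\norm{\zz_i}_{\LL}^2=\|\ZZ\BB^{\top}\|_F^2=\sum_{(a,b)\in E}\|\ZZ\bb_{ab}\|^2$, applies the already-established JL guarantee edge by edge, and collapses the sum to $(1+3\eps/125)\,\trace{\EE^{X}\LL^{\dag}\EE^{X}}\le(1+3\eps/125)\,n^3/6$. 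Second, you never supply the lower bound on $\sum_i a_i^2$ that your final division requires; the paper proves $\|\ZZ\bb_e\|^2\ge(1-3\eps/125)\,\bb_e^{\top}\LL^{\dag}\EE^{X}\LL^{\dag}\bb_e\ge(1-3\eps/125)/n$, and this $1/n$ consumes the remaining factor in the budget (numerator $\sim n^4/6$ times denominator $\sim n$ gives exactly the $n^5$ appearing in $\delta$). With your $O(n^4)$ trace bound the numerator becomes $\delta^2\,O(n^5)$ and the ratio $\delta^2\,O(n^6)=O(\eps^2 n)$, so the claimed $3\eps/125$ relative error between $\|\ZZtil\bb_e\|^2$ and $\|\ZZ\bb_e\|^2$ does not follow. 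The step is repairable---e.g., use $\norm{\LL^{\dag}}_2\le n^2/4$ (from the diameter bound on the algebraic connectivity) or the paper's trace identity, and make the $1/n$ lower bound explicit---but as written the proof does not close.
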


Then, based on Lemma~\ref{lem:error2} and the multiplicative property of $\eps$-approximation {(Fact~\ref{fac:1})}, we can obtain the final guarantee for the total distance.
\begin{lemma}\label{lem:app_delta}
Let $\Tilde{\Delta}(e)=\Tilde{\alpha}\|\ZZtil_1 \bb_e\|^2+\Tilde{\beta}\|\ZZtil_2 \bb_e\|^2+\Tilde{\gamma}\|\ZZtil_3 \bb_e\|^2$, we have $\Tilde{\Delta}(e)\stackrel{\epsilon/5}{\approx} \Bar{\Delta}(e)$.
\end{lemma}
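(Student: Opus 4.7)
The plan is to combine Lemma~\ref{lem:error2} with the scalar approximations $\tilde{\alpha}, \tilde{\beta}, \tilde{\gamma}$ through the composition rules for $\epsilon$-approximations stated in Fact~\ref{fac:1}. Because every quantity involved is nonnegative, the multiplicative and transitive properties apply term by term, so the whole argument reduces to a careful error budget: each internal tolerance is set to $\epsilon/20$, and after one multiplication and two summations the aggregated error does not exceed $\epsilon/5$.

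First, I would apply Lemma~\ref{lem:error2} three times, once for each of $X = V, S, T$, to obtain, for every candidate edge $e$,
\[
\|\ZZtil_1 \bb_e\|^2 \stackrel{\epsilon/20}{\approx} \|\LL^{\dag}\bb_e\|^2,\;
\|\ZZtil_2 \bb_e\|^2 \stackrel{\epsilon/20}{\approx} \|\EE^{S}\LL^{\dag}\bb_e\|^2,\;
\|\ZZtil_3 \bb_e\|^2 \stackrel{\epsilon/20}{\approx} \|\EE^{T}\LL^{\dag}\bb_e\|^2.
\]
In parallel, I would verify that $\tilde{\alpha} \stackrel{\epsilon/20}{\approx} \alpha$, $\tilde{\beta} \stackrel{\epsilon/20}{\approx} \beta$, and $\tilde{\gamma} \stackrel{\epsilon/20}{\approx} \gamma$. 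Each of $\alpha, \beta, \gamma$ is a nonnegative linear combination of $\trace{\LL^{\dag}}$ and of the partial diagonal sums $\LL^{\dag}_{S}, \LL^{\dag}_{T}$, so running \textsc{AppDiag} with tolerance $\epsilon/20$ to estimate $\LL^{\dag}_{S}, \LL^{\dag}_{T}$ and $\trace{\LL^{\dag}}$, and then appealing to the additivity rule of Fact~\ref{fac:1}, delivers the three required coefficient approximations.

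Next, I would combine these bounds term by term. For the $\alpha$-summand, two applications of the multiplicative rule of Fact~\ref{fac:1} give
\[
\tilde{\alpha}\,\|\ZZtil_1 \bb_e\|^2 \stackrel{\epsilon/20}{\approx} \alpha\,\|\ZZtil_1 \bb_e\|^2 \stackrel{\epsilon/20}{\approx} \alpha\,\|\LL^{\dag}\bb_e\|^2,
\]
which compose by transitivity into $\tilde{\alpha}\,\|\ZZtil_1 \bb_e\|^2 \stackrel{\epsilon/10}{\approx} \alpha\,\|\LL^{\dag}\bb_e\|^2$. The same chain applies verbatim to the $\beta$- and $\gamma$-terms. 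Two invocations of the addition rule of Fact~\ref{fac:1}, which preserves the relative error when summing componentwise nonnegative approximations, then yield $\tilde{\Delta}(e) \stackrel{\epsilon/10}{\approx} \bar{\Delta}(e)$; since $\epsilon/10 \le \epsilon/5$, the stated conclusion follows.

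The main obstacle, and essentially the only technical step, is the bookkeeping of precision parameters: one must ensure that the tolerance passed to \textsc{AppDiag} for $\tilde{\alpha}, \tilde{\beta}, \tilde{\gamma}$, the precision $\delta$ supplied to \textsc{Solve}, and the sketch dimension $q$ required in Lemma~\ref{lem:error2} can be selected simultaneously so that no per-term budget is blown while the overall running time remains nearly linear in $n$. Nonnegativity of every piece---granted by $\alpha,\beta,\gamma \ge 0$ as convex combinations of positive spectral quantities, and by squared norms being nonnegative---is precisely what allows Fact~\ref{fac:1} to be chained without modification; the remainder is mechanical.
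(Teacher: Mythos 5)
Your proposal is correct and takes essentially the same route as the paper, whose entire justification for this lemma is the one-line appeal to Lemma~\ref{lem:error2} together with the multiplicative/additive rules of Fact~\ref{fac:1}. Your explicit bookkeeping (an $\epsilon/20$ budget per component, matching $\delta_2=\epsilon/20$ in Algorithm~\ref{alg:furthest}, composed via transitivity and summation into $\epsilon/10\le\epsilon/5$) is consistent with the paper's parameter choices and in fact supplies more detail than the paper itself does.
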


{The formal statements of the results from prior work (ApproxCH, JL Lemma, and AppDiag.) are provided in Section~\ref{sec:support}.} Building on the key elements proven above, we provide \Farthest (Algorithm~\ref{alg:furthest}) to find the best unconnected node pair. As illustrated in the pipeline (Fig.~\ref{fig:pipeline}), the algorithm begins by constructing a coordinate matrix $\XX$ with the help of the aforementioned techniques, then obtaining the set $P$ of points from $\XX$, followed by approximating the set $\hat{P}$ of points on the convex hull, and then traversing set $\hat{P}$ and identifying the locally optimal unconnected node pair.

\begin{algorithm}[t!]
	\caption{$\Farthest(\calG, \epsilon)$}\label{alg:furthest}
	\Input{
		A connected graph $\calG=(V,E)$ with $n$ nodes; an error parameter $\epsilon\in(0,1)$
	}
	\Output{
		A node pair $(i, j)$
	}
    $\delta_1 =  \frac{\epsilon }{125} \sqrt{\frac{6(1-3\eps/125)}{n^{5}(1+3\eps/125)}}$, $q=\left\lceil\frac{24\log n}{(3\epsilon/125)^2}\right\rceil$\;
    $\delta_2 =  \epsilon/20$, $\delta_3 =  \epsilon/25$\;

    Generate $q\times n$ random $\pm 1 / \sqrt{q}$ matrices $\QQ_1$, $\QQ_2$, and $\QQ_3$\;
    Compute $\Bar{\QQ}_1 = \QQ_1\EE^V$, $\Bar{\QQ}_2 = \QQ_2\EE^S$ and $\Bar{\QQ}_3 = \QQ_3\EE^T$\;
    Compute $\Tilde{\alpha}$, $\Tilde{\beta}$, and $\Tilde{\gamma}$ using elements in $\rr=\textsc{AppDiag}(\calG,\delta_2)$\;
	\For{$i = 1$ to $q$}{
		$\ZZtil_1[i,:] =\textsc{Solve}\left(\LL,\Bar{\QQ}_1[i,:]^{\top}, \delta_1\right)^{\top}$\;
        $\ZZtil_2[i,:] =\textsc{Solve}\left(\LL,\Bar{\QQ}_2[i,:]^{\top}, \delta_1\right)^{\top}$\; $\ZZtil_3[i,:] =\textsc{Solve}\left(\LL,\Bar{\QQ}_3[i,:]^{\top}, \delta_1\right)^{\top}$\;
	}
    Construct a matrix $\XX=(\sqrt{\Tilde{\alpha}}\ZZtil_1^\top \,\sqrt{\Tilde{\beta}}\ZZtil_2^\top \,\sqrt{\Tilde{\gamma}}\ZZtil_3^\top)^\top$\;
    $P = \{\XX[1,:], \XX[2,:], \cdots, \XX[n,:]\}$\;
    $\Hat{P}=\textsc{ApproxCH}(P, \delta_3)$\;
    Initialize $maxDistance \gets -\infty$\;
    Initialize $maxNodePair$ as an empty tuple\;
    \For{$i, j \in \Hat{P}$}{
        Compute $\Tilde{\Delta}(e=(i,j))$\;
            \If{$\Tilde{\Delta}(e) > maxDistance$}{
            $maxDistance \gets \Tilde{\Delta}(e)$\;
            $maxNodePair \gets (i, j)$\;}
    }
    \Return $maxNodePair$\;
\end{algorithm}

{\subsection{Theoretical Analysis}}

The \Farthest algorithm (Algorithm~\ref{alg:furthest}) serves as the core subroutine for finding the most influential non-existent edge in each step. Its performance is summarized as follows.

\begin{lemma}
\label{lem:performance}
Let $\epsilon\in (0,1)$ and let $\Bar{\Delta}^{\max}=\max_{e\in\Bar{E}}\bar{\Delta}(e)$ and let $\Tilde{\Tilde{\Delta}}^{\max}$ be the output of the \Farthest algorithm. The time complexity of \Farthest (Algorithm~\ref{alg:furthest}) is $\tilde{\mathcal{O}}\left(\left(m+nc\right)\epsilon^{-2}\right)$, and its output satisfies $\Tilde{\Tilde{\Delta}}^{\max}\stackrel{\epsilon}{\approx} \Bar{\Delta}^{\max}.$
\end{lemma}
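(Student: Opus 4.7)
The plan is to argue the two parts of the lemma separately: first the multiplicative approximation guarantee by composing the two geometric approximations, and second the runtime by summing the costs of the individual routines invoked.

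For the approximation bound, I would proceed as follows. Write $\Tilde{\Delta}^{\max}=\max_{e\in\bar E}\Tilde{\Delta}(e)$, i.e., the squared diameter of the approximate point set $P=\{\tilde{\cc}_i\}_{i\in V}$. The output $\Tilde{\Tilde{\Delta}}^{\max}$ of the algorithm is the maximum of $\Tilde{\Delta}(\cdot)$ restricted to pairs whose endpoints lie in $\Hat{P}=\textsc{ApproxCH}(P,\delta_3)$ with $\delta_3=\epsilon/25$. First, by Lemma~\ref{lem:app_delta} we have $\Tilde{\Delta}(e)\stackrel{\epsilon/5}{\approx}\Bar{\Delta}(e)$ uniformly over $e\in\bar E$; taking the maximum on both sides (using the edge attaining $\Bar{\Delta}^{\max}$ for the lower bound and the edge attaining $\Tilde{\Delta}^{\max}$ for the upper bound) yields $\Tilde{\Delta}^{\max}\stackrel{\epsilon/5}{\approx}\Bar{\Delta}^{\max}$. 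Second, applying Lemma~\ref{lem:furthest} to the point set $P$ with approximation parameter $\epsilon/25$ gives $\Tilde{\Delta}^{\max}\stackrel{\epsilon/5}{\approx}\Tilde{\Tilde{\Delta}}^{\max}$, since the diameter of $P$ coincides with the diameter of its convex hull and the latter is preserved up to $\epsilon/5$ by the approximate hull's diameter. Finally, composing the two estimates with Fact~\ref{fac:1} yields $\Tilde{\Tilde{\Delta}}^{\max}\stackrel{2\epsilon/5}{\approx}\Bar{\Delta}^{\max}$, which is strictly stronger than the claimed $\epsilon$-approximation.

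For the time complexity, I would account for the cost line by line of Algorithm~\ref{alg:furthest}. Setting $q=\lceil 24\log n/(3\epsilon/125)^2\rceil=\tilde{O}(\epsilon^{-2})$, generating the three $q\times n$ random sign matrices and forming $\bar{\QQ}_1,\bar{\QQ}_2,\bar{\QQ}_3$ takes $O(qn)$. The call to \textsc{AppDiag} with accuracy $\delta_2=\epsilon/20$ returns the scaling coefficients $\tilde\alpha,\tilde\beta,\tilde\gamma$ in $\tilde{O}(m\epsilon^{-2})$ time. The solver loop invokes $3q$ instances of \textsc{Solve} on an SDDM system with accuracy $\delta_1=\Theta(\epsilon/n^{5/2})$; each solve runs in $\tilde{O}(m\log(1/\delta_1))=\tilde{O}(m)$ time (the $\log(1/\delta_1)$ is absorbed into the $\tilde{O}$), producing a total of $\tilde{O}(qm)=\tilde{O}(m\epsilon^{-2})$. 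The resulting low-dimensional embedding has dimension $3q=\tilde{O}(\epsilon^{-2})$, so the call $\textsc{ApproxCH}(P,\delta_3)$ on the $n$ embedded points runs in $\tilde{O}(nc\epsilon^{-2})$ and returns a set $\Hat{P}$ of $c$ candidate vertices. The final pairwise-distance loop computes $\Tilde{\Delta}(e)$ for at most $\binom{c}{2}$ pairs, each in $O(q)$ time, for a cost of $\tilde{O}(c^{2}\epsilon^{-2})\le\tilde{O}(nc\epsilon^{-2})$ since $c\le n$. Summing all contributions gives $\tilde{O}((m+nc)\epsilon^{-2})$ as claimed.

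The main obstacle is the approximation composition: one has to be precise that $P$ in Lemma~\ref{lem:furthest} is taken to be the \emph{approximate} embedding used by the algorithm rather than the ideal high-dimensional coordinates from Eq.~\eqref{eq:zuobiao}, so that Lemma~\ref{lem:furthest} can be chained with Lemma~\ref{lem:app_delta} without double-counting error. Once this is set up, propagating the $\epsilon/5$ errors via Fact~\ref{fac:1} and verifying that the explicit choices $\delta_1,\delta_2,\delta_3$ in the algorithm satisfy the hypotheses of the two lemmas is direct. The runtime accounting is routine provided we observe that $q,\log(1/\delta_1),\log n$ all collapse into $\tilde{O}(\cdot)$ factors, and that the dimension input to \textsc{ApproxCH} is the reduced $3q$ rather than the original $2n$.
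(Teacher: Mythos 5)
Your proposal is correct and follows essentially the same route as the paper: it chains Lemma~\ref{lem:app_delta} ($\Tilde{\Delta}^{\max}\stackrel{\epsilon/5}{\approx}\Bar{\Delta}^{\max}$) with Lemma~\ref{lem:furthest} ($\Tilde{\Tilde{\Delta}}^{\max}\stackrel{\epsilon/5}{\approx}\Tilde{\Delta}^{\max}$) and composes the two factors to land within $(1\pm\epsilon)$, and it accounts the runtime as coordinate construction plus approximate-hull search, exactly as the paper does. Your line-by-line cost accounting and the explicit remark that Lemma~\ref{lem:furthest} must be applied to the \emph{approximate} embedding are just more detailed versions of the paper's argument.
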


\begin{proof}
    For the time complexity, it takes $\tilde{\mathcal{O}}(m\eps^{-2})$ time to determine the node coordinates and $\tilde{\mathcal{O}}(nc\eps^{-2})$ time to approximate the convex hull. Therefore, the total time complexity of \Farthest is $\tilde{\mathcal{O}}\left(\left(m+nc\right)\epsilon^{-2}\right)$.
    For the approximation guarantee, let $\Tilde{\Delta}^{\max}=\max_{e\in\Bar{E}}\Tilde{\Delta}(e)$. From the results in Lemma~\ref{lem:furthest}, we have that the diameter found on the approximate hull is a good approximation of the diameter of the full set of approximate points, i.e., $\Tilde{\Tilde{\Delta}}^{\max}\stackrel{\epsilon/5}{\approx}\Tilde{\Delta}^{\max}$. Based on Lemma~\ref{lem:app_delta}, $\Tilde{\Delta}^{\max}\stackrel{\epsilon/5}{\approx} \Bar{\Delta}^{\max}$.
    Therefore,
    \begin{align}
        \notag (1-\eps)\Bar{\Delta}^{max}\le&(1 -\eps/5)^2\Bar{\Delta}^{max}\leq\Tilde{\Tilde{\Delta}}^{max}\\
        \le&(1+\eps/5)^2\Bar{\Delta}^{max}\leq(1+\eps)\Bar{\Delta}^{max}\notag,
    \end{align}
    which concludes the proof.
\end{proof}

Our main algorithm, \Fast (Algorithm~\ref{alg:fast}), iteratively calls \Farthest to select $k$ non-existent edges to add to the graph. The overall performance of \Fast is detailed as follows.

\begin{theorem}
    \label{the:performance}
    For any integer $0<k \le \bar{m}$ and any error parameter $\epsilon\in(0,1)$, \Fast runs in time $\tilde{\mathcal{O}}(k(m+nc)\epsilon^{-2})$ and outputs a set of $k$ edges. In each of the $k$ iterations, the algorithm selects a non-existing edge that is an $\epsilon$-approximation of the optimal choice at that step.
\end{theorem}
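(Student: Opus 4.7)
The plan is to derive Theorem~\ref{the:performance} as a direct bootstrap of Lemma~\ref{lem:performance} across the $k$ outer iterations of Algorithm~\ref{alg:fast}. First, for the running time, I would observe that a single iteration of \Fast performs exactly one call to \Farthest (costing $\tilde{\mathcal{O}}((m+nc)\epsilon^{-2})$ by Lemma~\ref{lem:performance}) plus $\mathcal{O}(1)$ bookkeeping to extend $\add$ and update $\calG$. Since at most $k$ edges are added overall, the edge count grows by at most $k$ and $c$ remains bounded by $n$ at every step, so each of the $k$ per-iteration costs stays $\tilde{\mathcal{O}}((m+nc)\epsilon^{-2})$ (with lower-order increments absorbed into the $\tilde{\mathcal{O}}$ notation). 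Summing over the $k$ rounds yields the claimed bound $\tilde{\mathcal{O}}(k(m+nc)\epsilon^{-2})$.

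For the approximation statement, I would argue round by round. Fix an iteration $b \in \{1,\dots,k\}$ and let $\calG_b = (V, E_b)$ be the graph at the start of that round, with $\bar{E}_b$ its non-edges and $\bar{\Delta}^{\max}_b = \max_{e \in \bar{E}_b} \bar{\Delta}(e)$ the best surrogate marginal on $\calG_b$. Denote by $\tilde{\tilde{\Delta}}^{\max}_b$ the surrogate value of the pair returned by \Farthest in that round. Applying the second conclusion of Lemma~\ref{lem:performance} to $\calG_b$ immediately gives $\tilde{\tilde{\Delta}}^{\max}_b \stackrel{\epsilon}{\approx} \bar{\Delta}^{\max}_b$, meaning the selected pair's surrogate value lies within a multiplicative $(1\pm\epsilon)$ factor of the optimum at step $b$. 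Iterating this reasoning across $b = 1,\dots,k$ establishes the per-step $\epsilon$-approximation statement of the theorem.

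The main subtlety to address is that the Laplacian $\LL$, the node coordinates $\cc_i$ from Eq.~\eqref{eq:zuobiao}, and the approximate convex-hull output $\hat{P}$ all evolve as edges are inserted; one must justify that Lemma~\ref{lem:performance} can be legitimately re-invoked in every round. This is clean because \Fast rebuilds and re-calls \Farthest on the current graph from scratch, and the internal error parameters $q$, $\delta_1$, $\delta_2$, $\delta_3$ chosen in Algorithm~\ref{alg:furthest} depend only on $n$ and $\epsilon$, both of which are fixed across all rounds. Hence the hypotheses of Lemma~\ref{lem:performance} are satisfied at every iteration, and combining the time accounting with the per-round approximation guarantee yields Theorem~\ref{the:performance}.
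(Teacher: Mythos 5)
Your proposal is correct and follows essentially the same route as the paper: the paper likewise invokes Lemma~\ref{lem:performance} once per round for the per-step $\epsilon$-approximation and multiplies the per-call cost of \Farthest by $k$ for the total running time. Your added remark that the internal parameters of \Farthest depend only on $n$ and $\epsilon$, so the lemma can be legitimately re-invoked on the evolving graph, is a reasonable elaboration of a point the paper leaves implicit.
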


\begin{proof}
The approximation guarantee for each iteration is established by Lemma~\ref{lem:performance}. The \Fast algorithm performs $k$ rounds of this selection process (Lines 2-5 of Algorithm~\ref{alg:fast}). In each round, it calls \Farthest, taking $\tilde{\mathcal{O}}\left(\left(m+nc\right)\epsilon^{-2}\right)$ time. Consequently, the total time complexity of \Fast is $\tilde{\mathcal{O}}\left(k\left(m+nc\right)\epsilon^{-2}\right)$. As we will observe in Section~\ref{experiments}, $c$ is usually very small in real-world networks and $m=\mathcal{O}(n)$, which results in a nearly linear time complexity.
\end{proof}

{
\noindent\textbf{Remark}.
It is important to clarify the scope of guarantees offered by our surrogate-based approximation. 
The classical $(1-1/e)$ guarantees for cardinality-constrained submodular maximization do not apply as $F$ is monotone non-increasing but not submodular. 
This limitation is intrinsic to the problem structure rather than to our algorithm. 
Importantly, approximation error does not accumulate across iterations: each edge still decreases $F$, and the only effect of inexact marginals is that the decrease may be smaller than the maximum possible. Subsequent steps recompute marginals on the updated graph and do not inherit the earlier error. The experimental results in Section~VII provide strong evidence for this.
Moreover, when $F$ exhibits weak submodularity, a constant-factor greedy bound remains valid and degrades gracefully with estimation error.
}
\begin{figure*}[ht!]
    \centering
    \includegraphics[width=\textwidth]{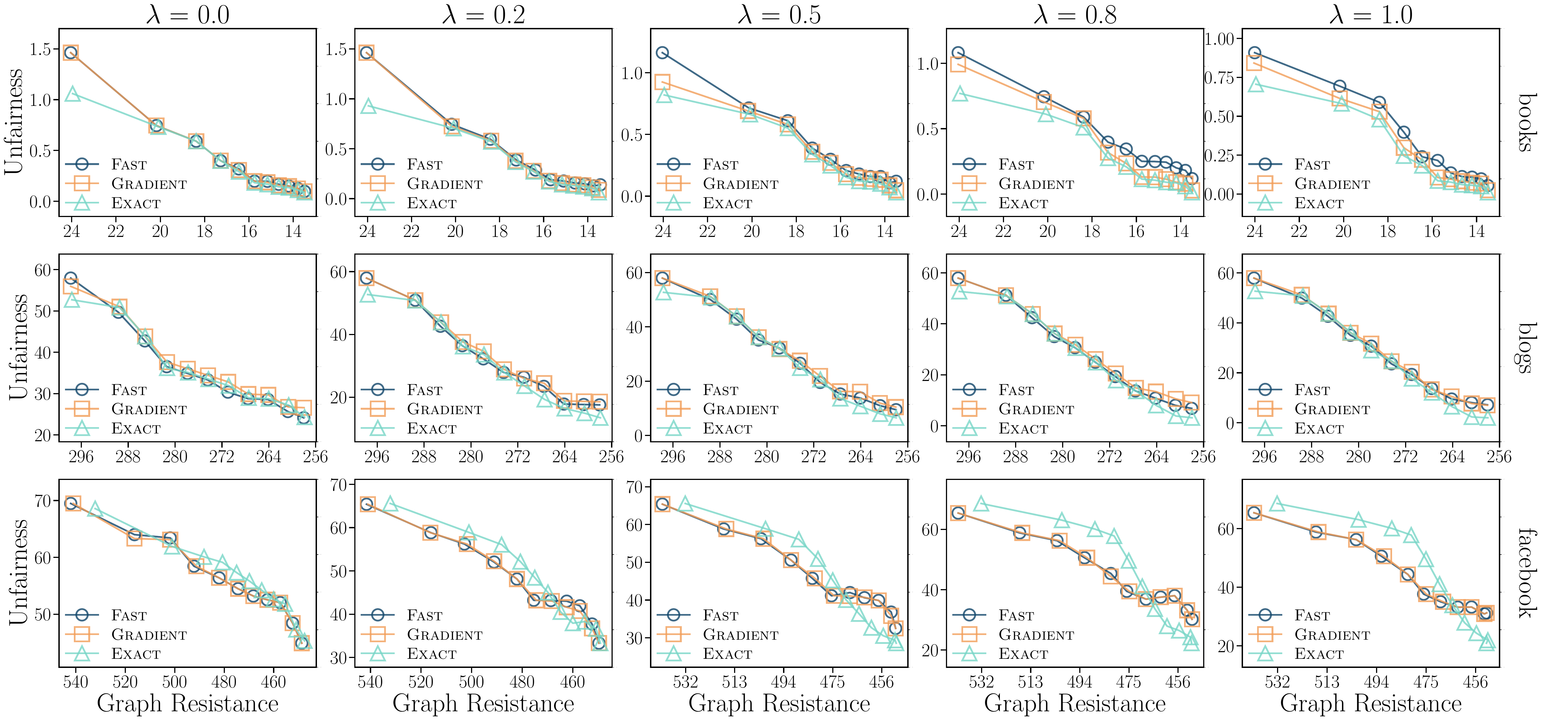}
    \vspace{-0.3cm}
    \caption{Graph resistance and unfairness as a function of the number of added edges for our algorithms. 
    }
    \vspace{-0.3cm}
    \label{fig:balance}
\end{figure*}

\section{Experiment}
\label{experiments}
We have conducted an extensive set of experiments on a large group of real-world and synthetic networks to evaluate the  effectiveness and efficiency of our algorithms, namely \Greedy (see Algorithm~\ref{alg:greedy}), \Gradient (see the end of Subsection~\ref{sec:gradient}), and \Fast (see Algorithm~\ref{alg:fast}). 
\begin{table}[t]
\begin{center}
\caption{Characteristics of the real-world networks. Parameters ratio (dis) and $U$ represent the population ratio of the disadvantaged group and the initial unfairness respectively.}
\label{tab:data}
\normalsize
\resizebox{1\columnwidth}{!}{
\fontsize{7}{8}\selectfont
\begin{tabular}{ccccc}
\Xhline{1\arrayrulewidth}
\raisebox{-0.5ex}{Network} & \raisebox{-0.5ex}{$n$} & \raisebox{-0.5ex}{$m$} & \raisebox{-0.5ex}{ratio (dis)} & \raisebox{-0.5ex}{$U$}\\[0.5ex]
\hline
\raisebox{-0.5ex}{books} & \raisebox{-0.5ex}{92} & \raisebox{-0.5ex}{748} & \raisebox{-0.5ex}{0.467} & \raisebox{-0.5ex}{3}\\[0.5ex]
\raisebox{-0.5ex}{blogs} & \raisebox{-0.5ex}{1,222} & \raisebox{-0.5ex}{16,717} &  \raisebox{-0.5ex}{0.485}& \raisebox{-0.5ex}{56} \\[0.5ex]
\raisebox{-0.5ex}{facebook} & \raisebox{-0.5ex}{4,039} & \raisebox{-0.5ex}{88,234} & \raisebox{-0.5ex}{0.388}& \raisebox{-0.5ex}{67} \\[0.5ex]
\raisebox{-0.5ex}{tmdb} & \raisebox{-0.5ex}{8,490} & \raisebox{-0.5ex}{207,036}& \raisebox{-0.5ex}{0.692}  & \raisebox{-0.5ex}{359}\\[0.5ex]
\raisebox{-0.5ex}{dblp-course} & \raisebox{-0.5ex}{13,015} & \raisebox{-0.5ex}{75,888}& \raisebox{-0.5ex}{0.172} & \raisebox{-0.5ex}{462} \\[0.5ex]
\raisebox{-0.5ex}{dblp-pub} & \raisebox{-0.5ex}{16,501} & \raisebox{-0.5ex}{133,226}& \raisebox{-0.5ex}{0.080}  & \raisebox{-0.5ex}{302}\\[0.5ex]
\raisebox{-0.5ex}{dblp-gender} & \raisebox{-0.5ex}{16,501} & \raisebox{-0.5ex}{66,613}& \raisebox{-0.5ex}{0.257} & \raisebox{-0.5ex}{429} \\[0.5ex]
\raisebox{-0.5ex}{dblp-aminer} & \raisebox{-0.5ex}{423,469} & \raisebox{-0.5ex}{1,231,211}& \raisebox{-0.5ex}{0.185} & \raisebox{-0.5ex}{11,373} \\[0.5ex]
\raisebox{-0.5ex}{linkedin} & \raisebox{-0.5ex}{3,209,448} & \raisebox{-0.5ex}{6,508,228}& \raisebox{-0.5ex}{0.627} & \raisebox{-0.5ex}{41,979} \\[0.5ex]
\Xhline{1\arrayrulewidth}
\end{tabular}
}
\end{center}
\vspace{-10pt}
\end{table}
\subsection{Experimental Setup}

\noindent\textbf{Machine Configuration.} All algorithms are implemented in Julia and run on a Linux box with 4.2 GHz Intel i7-7700 CPU and 32GB of main memory. All experiments are executed using a single thread. 
\noindent\textbf{Baselines.} We compare our proposed algorithms against the following algorithms, which are based on classical centrality measures or methods for minimizing the Kirchhoff index~\cite{wang2014improving}.
\begin{itemize}[leftmargin=*]
\item \textsc{Random}: randomly choose $k$ unconnected node pairs from two groups $S$ and $T$.
\item \textsc{Optimum}: 
the optimal strategy has two variants, \textsc{Optimum-R} and \textsc{Optimum-U} for choosing $k$ non-adjacent node pairs whose addition maximally decreases the graph resistance and unfairness respectively using a brute-force searching method.
\item \textsc{DP}, \textsc{BP} (DS, BS): choose $k$ non-adjacent node pairs with the maximum product (summation) of degrees, betweennesses, for two nodes from distinct groups.
\item  \textsc{Fiedler}: choose $k$ non-adjacent node pairs with the maximum difference of the Fiedler value from distinct groups.
\item  \textsc{ER}: choose $k$ non-adjacent node pairs with the maximum resistance distance from distinct groups.
\end{itemize}

Note that the idea of the last six baselines comes from~\cite{wang2014improving} but with proper adjustment to our problem setup. We have evaluated similar strategies to \textsc{DP} and \textsc{DS} for closeness and PageRank centrality. However, the performance is very poor, and therefore we do not report their results in our plots.

\noindent\textbf{Performance Measures.}
We focus on three measures: 1) the unfairness between different groups (i.e., $U$), where a smaller absolute value is preferred; 2) the graph resistance $R$, where a lower value is desired; 3) the running time of the algorithms.

\noindent\textbf{Datasets.} To evaluate our proposed strategies, we conduct experiments on a corpus of real-world networks (see Table~\ref{tab:data}) sourced from SNAP~\cite{LeSo16} and Network Repository~\cite{RoAh15}, as well as synthetic networks generated by the Barabasi-Albert-homophily (BAh) model~\cite{lee2019homophily} which has been frequently used in graph fairness related works~\cite{anwar2021balanced,Wang2022,HomoGla}. Note that nodes in the real-world networks come with inherent attribute information, which determines the groups $S$ and $T$.

\noindent\textbf{Network Generation Model.}
\label{appendix:ba}
Different random graph models have been proposed to mimic real-world social networks, cf.~\cite{friedrich2018diameter}. Such models are usually tailored to possess fundamental properties consistently observed in real-world networks, such as small diameter and power-law degree distribution. (One advantage of such synthetic graph structures over the real-world networks is that they give us the freedom to generate graphs of various sizes.) We have adopted a variant of the well-known Barabasi-Albert model called the Barabasi-Albert-homophily (BAh) model~\cite{lee2019homophily}. This variant has been frequently used in graph fairness related works~\cite{anwar2021balanced,Wang2022,HomoGla}. It incorporates a binary group attribute for each node, distinguishing between majority and minority groups. The probability of a newly added node belonging to the minority group is denoted by $f_a$ (for some $f_a < 0.5$) and the probability of attachment $\Pi_{i j}$ between new and existing nodes depends on both degree (as in the original BA model) and a homophily parameter $h$, as follows
\begin{align}
\Pi_{i j} \propto \begin{cases}h \dd_j & if \quad Group(i)=Group(j) \\ (1-h) \dd_j & if \quad Group(i) \neq Group(j)\end{cases}
\end{align} The parameter $h$ controls the level of homophily in the network, with $h > 0.5$ generating homophilic networks and $h < 0.5$ generating heterophilic networks. Choosing $h = 0.5$ results in the standard BA model and $h = 1$ results in the network fragmenting into two components corresponding to the two groups. We set $f_a=0.3$ and $h=0.7$ in all our experiments. We use BAhx to represent the BAh network with x nodes.

\subsection{Trade-off Between Fairness and Efficiency}
We vary the hyperparameter $\lambda\in\{0.0,0.2,0.5,0.8,1.0\}$ (higher values give more weight to fairness) to assess its impact on the trade-off between information access efficiency and fairness while iteratively adding \(k=1,2,\ldots,50\) edges.
The outcomes are demonstrated in Fig.~\ref{fig:balance} where we plot a marker every 5 data points. The $x$-axis represents graph resistance $R$ (the opposite of information access efficiency) and the $y$-axis represents unfairness $U$. The markers move from left to right to indicate the gradual addition of edges.

We observe as \(\lambda\) increases, \(U\) consistently decreases for the same number of added edges, though this comes at a slight increase in \(R\). Notably, the reduction in \(U\) is significant, while the increase in \(R\) is minimal. Thus, interestingly, one can significantly improve fairness with little comprise on overall information access efficiency.

Let us particularly investigate the outcome of \Greedy at two extremes: \(\lambda=0.0\) (disregarding fairness) and \(\lambda=1.0\) (solely optimizing fairness). For the blogs dataset, while the final value of $R$ for the case of $\lambda=0.0$ is only slightly lower than that for $\lambda=1.0$, the final values of unfairness $U$ are significantly different with $U=24$ for the former case and $U=4$ for the latter. Similar behavior is observed for the facebook dataset. More specifically, the value of \(U\) drastically reduces from 45 to 20 while the value of \(R\) increases slightly from 447 to 451.

\subsection{Effectiveness}
We evaluate our algorithms against the optimal and random strategies on small BAh networks with 5, 10, 20, and 30 nodes (which allow the computation of optimal solutions due to their size). For each graph size, we add 4 edges and averaged results over 100 randomly generated instances. Results in Fig.~\ref{fig:opt} demonstrate that our greedy algorithms consistently produce solutions close to optimal, significantly outperforming the random scheme. 

\begin{figure}[t]
    \centering
    \includegraphics[width=0.9\columnwidth]{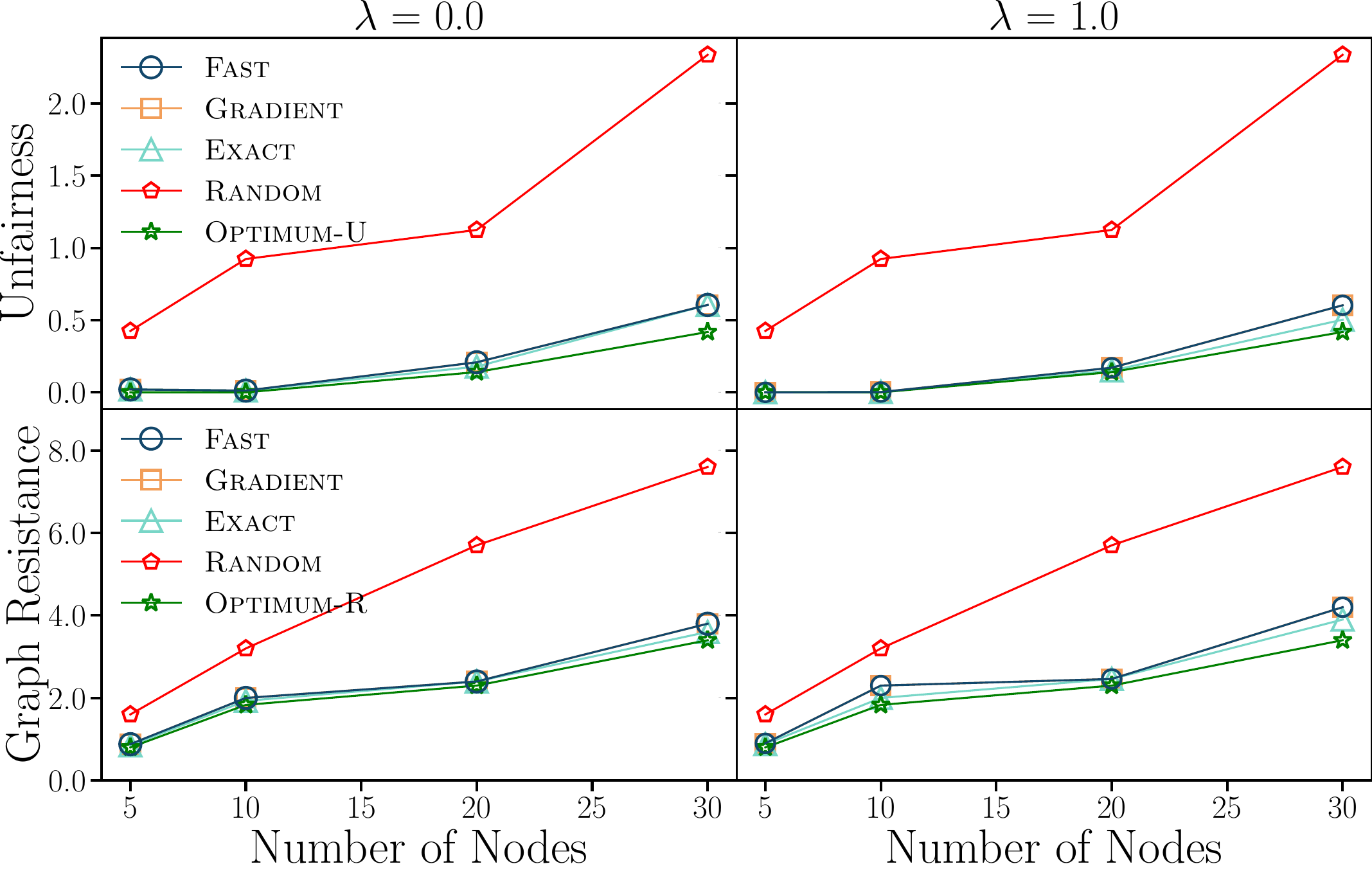}
    \vspace{-0.2cm}
    \caption{Comparison of the final unfairness and graph resistance for our algorithms, the random and optimal strategies.}
    \vspace{-0.2cm}
    \label{fig:opt}
\end{figure} 

{We compare our algorithms against baseline edge recommendation schemes—\textsc{DP}, \textsc{BP}, \textsc{DS}, \textsc{BS}, \textsc{ER}, and \textsc{Fiedler}—on larger real-world networks to further demonstrate their effectiveness. In all experiments, we greedily add 50 edges. When $\lambda=0$, the objective reduces solely to the graph resistance. In this case, we add \textsc{KirMin}~\cite{ZhAhZh25} as a specialized baseline and also include the method from~\cite{wang2014improving} without any adaptation, to preserve a faithful comparison. As shown in Fig.~\ref{fig:baseline1}, our approximation algorithms consistently achieve the largest reductions in the graph resistance across large graphs. Interestingly, \textsc{KirMin} obtains results close to ours under this special setting, but it is inherently restricted to the $\lambda=0$ case---once $\lambda \neq 0$, \textsc{KirMin} no longer applies. By contrast, our formulation flexibly generalizes to both single- and multi-objective optimization. When $\lambda=1$, the objective focuses exclusively on fairness. The results in Fig.~\ref{fig:baseline3} show that our algorithm consistently lowers unfairness more effectively than fairness-driven heuristics, which often yield only marginal improvements and may even increase disparity on some networks. Finally, when $\lambda=0.5$, the objective balances both goals. As illustrated in Fig.~\ref{fig:baseline2}, our greedy algorithms simultaneously reduce the graph resistance and unfairness more substantially and consistently than all baseline strategies, most of which improve only one metric in a limited or inconsistent manner. Taken together, these results highlight the effectiveness of our proposed algorithms across different $\lambda$ settings on large real-world graphs.}

\begin{figure}
    \centering
    \includegraphics[width=0.9\linewidth]{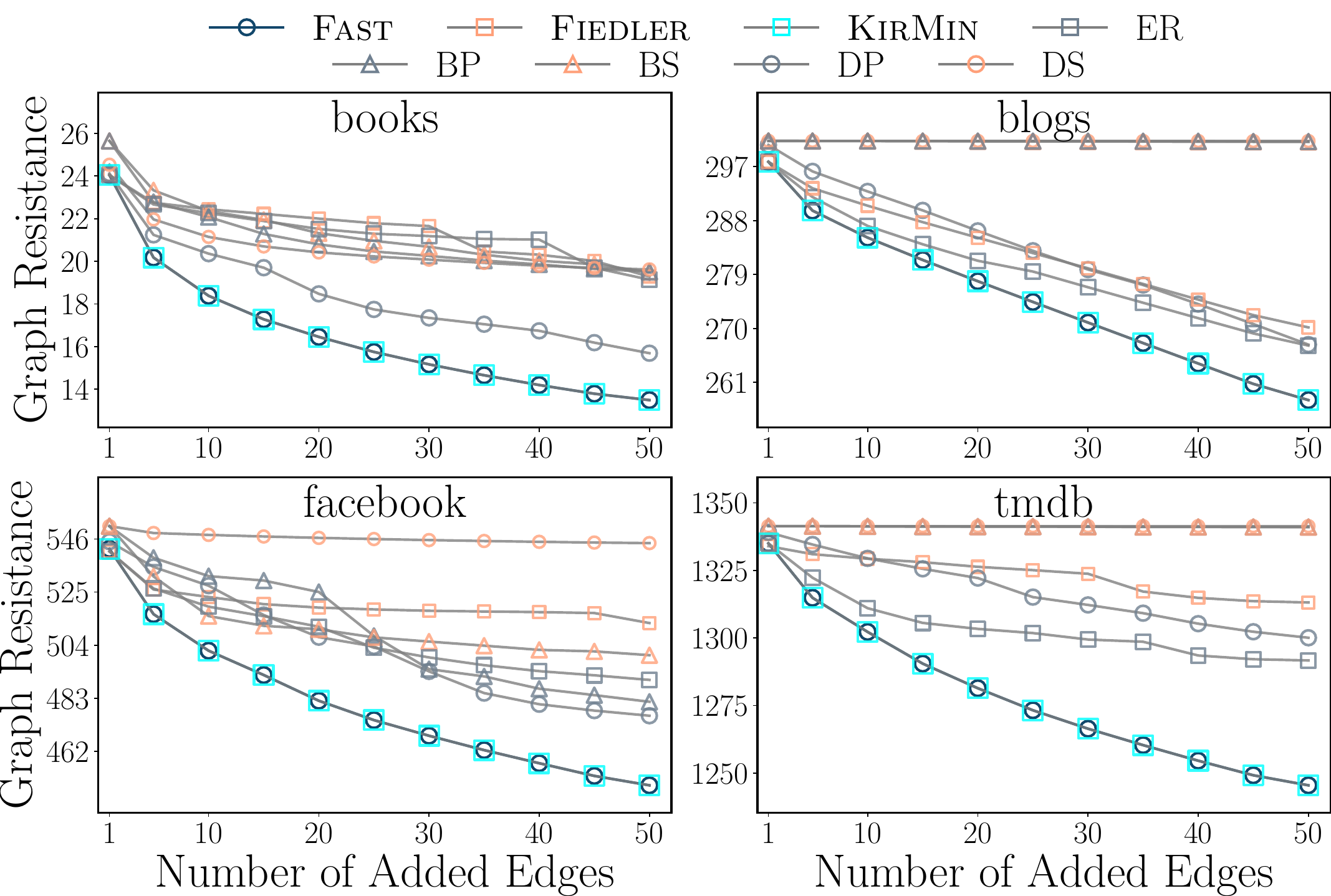}
    \vspace{-0.2cm}
    \caption{Graph resistance for different methods~($\lambda=0$).}
    \vspace{-0.3cm}
    \label{fig:baseline1}
\end{figure}
\begin{figure}
    \centering
    \includegraphics[width=0.9\linewidth]{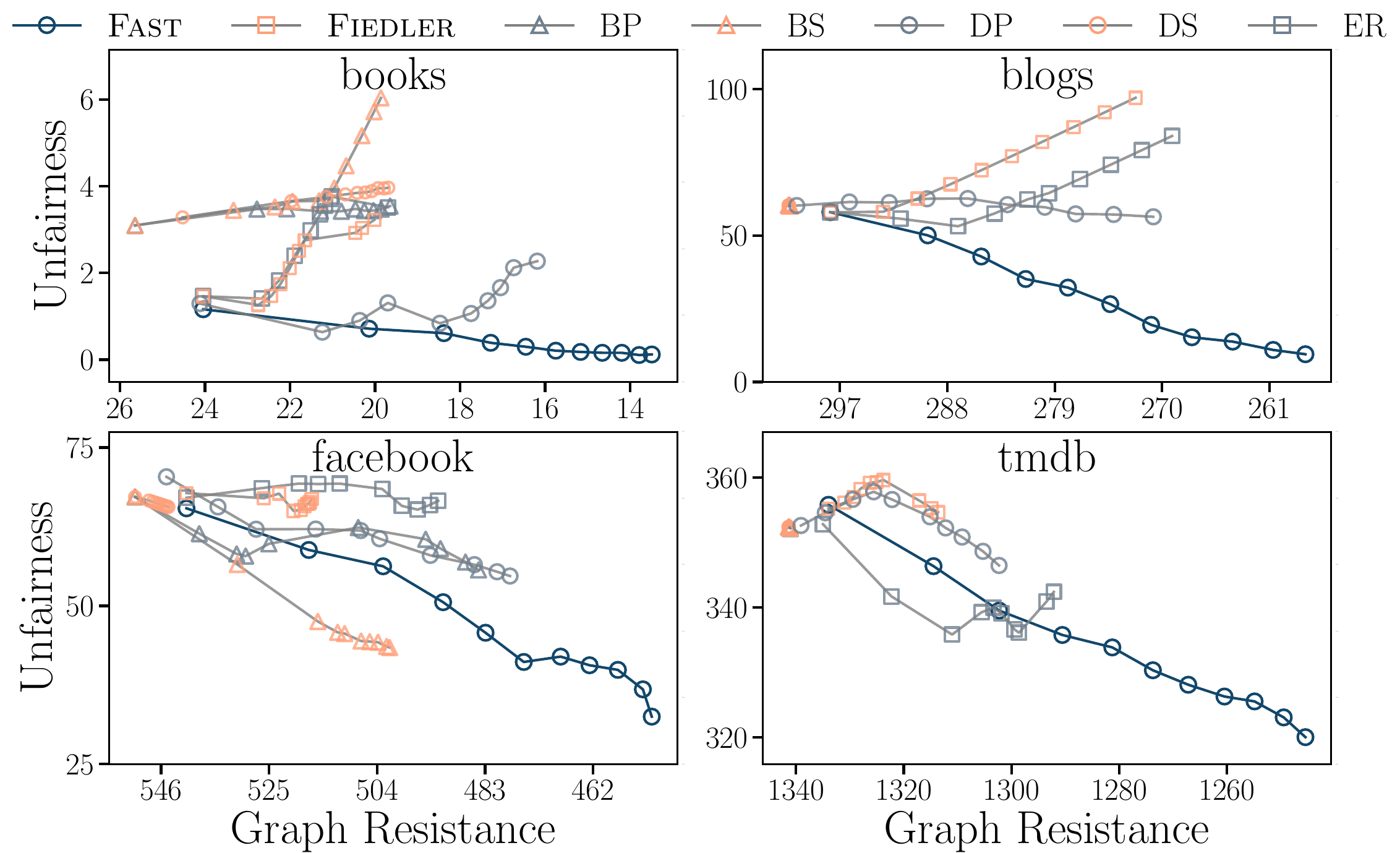}
    \vspace{-0.2cm}
    \caption{Graph resistance and unfairness for our algorithms and baselines~($\lambda=0.5$).}
    \label{fig:baseline2}
\end{figure}
\begin{figure}
    \centering
    \includegraphics[width=0.9\linewidth]{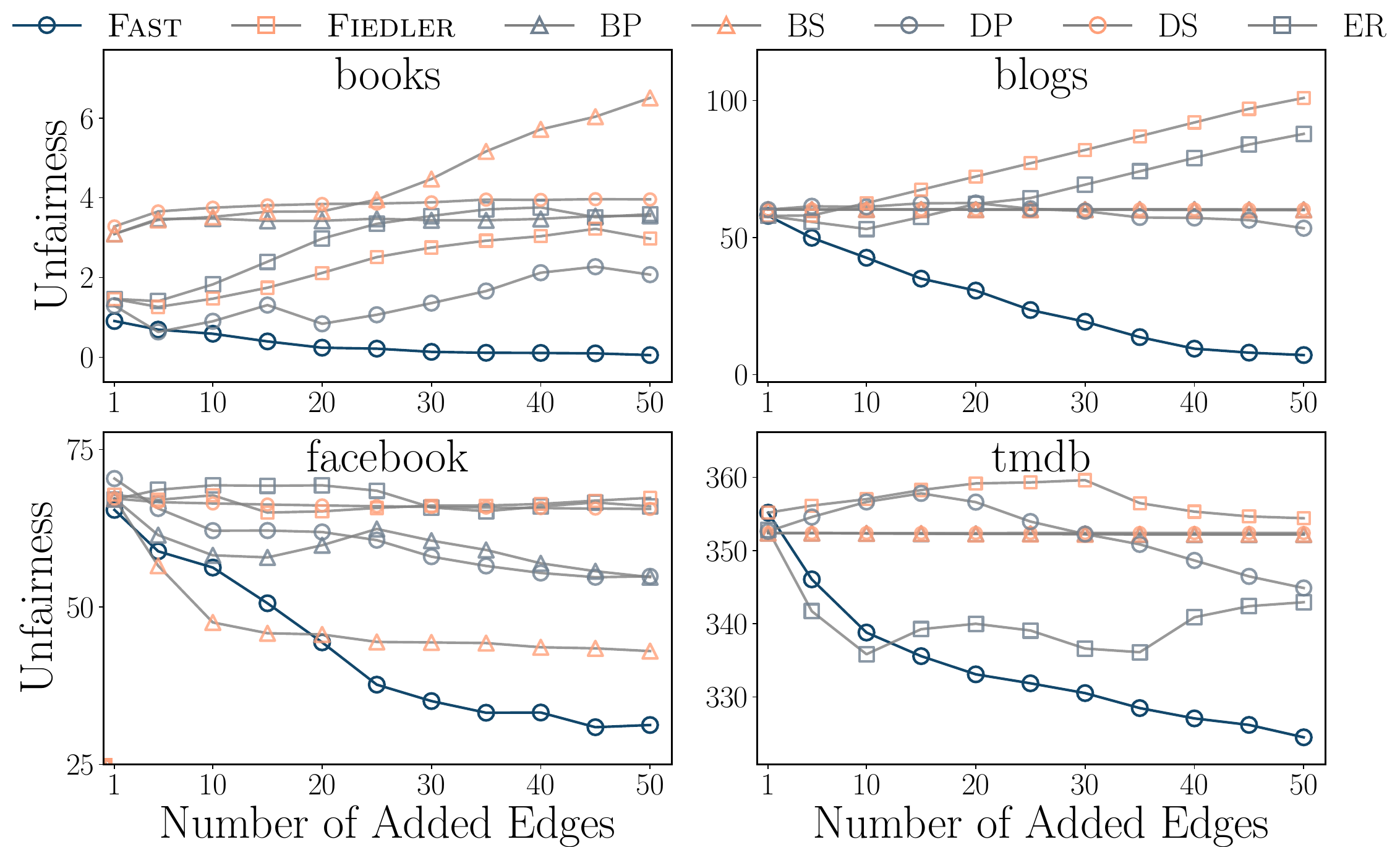}
    \vspace{-0.2cm}
    \caption{Unfairness for our algorithms and baselines~($\lambda=1$).}
    \label{fig:baseline3}
\end{figure}

\subsection{Comparison}
\begin{table*}
  \centering
  \caption{The running time (seconds, $s$) and the relative error of \Greedy and \Fast on real-world networks and synthetic networks of varying sizes for various $\epsilon$. For synthetic networks, the average and standard deviation are reported.}
  \vspace{-0.2cm}
  \label{tab:time}
  \fontsize{9}{9}\selectfont
  \setlength{\tabcolsep}{3pt}
  \begin{threeparttable}
    \begin{tabular}{llcccccccccc}
    \Xhline{2\arrayrulewidth}
    \multirow{2}*{Type} &
    \multirow{2}{*}{Network}&
    \multicolumn{4}{c}{Running time ($s$)} &
    \multicolumn{3}{c}{Relative error $\eta$  ($\times 10^{-2}$)} &
    \multicolumn{3}{c}{Relative error $\theta$ ($\times 10^{-2}$)} \\
    \cmidrule{3-6} \cmidrule{7-9} \cmidrule{10-12}
     && $\Greedy$ & $\eps{=}0.3$ & $\eps{=}0.2$ & $\eps{=}0.1$ 
     & $\eps{=}0.3$ & $\eps{=}0.2$ & $\eps{=}0.1$ 
     & $\eps{=}0.3$ & $\eps{=}0.2$ & $\eps{=}0.1$ \\
    \midrule
    \multirow{8}{*}{\rotatebox{90}{Real-world}}
    &books & 0.76 & 0.65 & 3 & 5 & 1.07 & 0.82 & 0.62 & 2.01 & 1.05 & 0.92 \\
    &blogs  & 187 & 9 & 19 & 72 & 1.30 & 0.75 & 0.13 & 5.07 & 3.75 & 0.61 \\
    &facebook  & 31,842 & 34 & 89 & 374 & 1.41 & 0.44 & 0.25 & 1.64 & 1.17 & 0.98 \\
    &tmdb  & 44,235 & 107 & 240 & 994 & 2.34 & 0.92 & 0.43 & 1.85 & 1.63 & 0.29 \\
    &dblp-course  & 212,151 & 213 & 445 & 1,849 & 3.21 & 2.21 & 0.49 & 4.90 & 2.31 & 0.91 \\
    &dblp-pub  & -- & 298 & 661 & 2,419 & -- & -- & -- & -- & -- & -- \\
    &dblp-gender & -- & 307 & 681 & 2,942 & -- & -- & -- & -- & -- & -- \\
    &dblp-aminer & -- & 3,121 & 7,021 & 23,211 & -- & -- & -- & -- & -- & -- \\
    &linkedin & -- & 16,401 & 30,244 & 151,231 & -- & -- & -- & -- & -- & -- \\
    \midrule
    \multirow{5}{*}{\rotatebox{90}{Synthetic}}
    &BAh100 
    & $8.2_{\pm 3.1}$ & $3.9_{\pm1.9}$ & $10.0_{\pm 4.9}$ & $23.1_{\pm7.4}$ 
    & $3.3_{\pm 0.6}$ & $0.9_{\pm0.3}$ & $0.6_{\pm0.3}$ 
    & $3.1_{\pm1.3}$ & $2.1_{\pm1.1}$ & $0.9_{\pm0.3}$ \\
    &BAh1000 
    & $120_{\pm 13.7}$ & $10_{\pm3.2}$ & $34_{\pm6.4}$ & $51_{\pm8.5}$ 
    & $2.1_{\pm1.2}$ & $0.5_{\pm0.3}$ & $0.4_{\pm0.2}$ 
    & $2.5_{\pm1.6}$ & $1.7_{\pm0.8}$ & $0.6_{\pm0.3}$ \\
    &BAh1E5 & -- & 119 & 254 & 1,129 & -- & -- & -- & -- & -- & -- \\
    &BAh1E6 & -- & 891 & 1,754 & 8,142 & -- & -- & -- & -- & -- & -- \\
    &BAh5E6  & -- & 32,014 & 74,721 & 271,465 & -- & -- & -- & -- & -- & -- \\
    \Xhline{2\arrayrulewidth}
    \end{tabular}
  \end{threeparttable}
\end{table*}
\vspace{-0.1cm}
\begin{table}[t]
  \centering
  \caption{Cardinality $c$ of the subset returned by \textsc{ApproxCH} for different $\epsilon$ across datasets.
  }
  \vspace{-0.2cm}
  \label{tab:c-values}
  \fontsize{9}{10}\selectfont
  \setlength{\tabcolsep}{5.7pt}
  \begin{threeparttable}
    \begin{tabular}{llccc}
      \Xhline{2\arrayrulewidth}
      \multirow{2}{*}{Type} & \multirow{2}{*}{Dataset} &
      \multicolumn{3}{c}{$c$ under different $\epsilon$} \\
      \cmidrule(lr){3-5}
      && $\epsilon{=}0.3$ & $\epsilon{=}0.2$ & $\epsilon{=}0.1$ \\
      \midrule
      \multirow{9}{*}{\rotatebox{90}{Real-world}}
        & books         & 9   & 16   & 23   \\
        & blogs         & 21   & 30   & 48  \\
        & facebook      & 36  & 61  & 120  \\
        & tmdb          & 40  & 76  & 138  \\
        & dblp-course   & 65  & 112  & 155  \\
        & dblp-pub      & 67  & 125  & 165  \\
        & dblp-gender   & 70  & 125  & 170  \\
        & dblp-aminer   & 113  & 210 & 260 \\
        & linkedin      & 220 & 330 & 410 \\
      \midrule
      \multirow{5}{*}{\rotatebox{90}{Synthetic}}
        & BAh100        & $12_{\pm 3}$   & $20_{\pm 2}$   & $30_{\pm 3}$   \\
        & BAh1000       & $20_{\pm 2}$   & $30_{\pm 4}$   & $40_{\pm 3}$ \\
        & BAh1E5        & $43_{\pm 4}$ & $80_{\pm 5}$ & $120_{\pm 4}$ \\
        & BAh1E6        & $125_{\pm 5}$ & $230_{\pm 5}$ & $280_{\pm 5}$ \\
        & BAh5E6        & $210_{\pm 6}$ & $310_{\pm 10}$ & $420_{\pm 10}$ \\
      \Xhline{2\arrayrulewidth}
    \end{tabular}
  \end{threeparttable}
  \vspace{-10pt}
\end{table}

As shown above, our algorithms outperform baseline algorithms in optimizing efficiency and fairness. Here, we focus on comparing \Greedy and \Fast against each other. We first compare their run-time on real-world networks and synthetic BAh networks of varying sizes, averaging results from 100 instances for synthetic networks. Our findings are given in Table~\ref{tab:time}. With $\lambda=0.5$, \Greedy fails to terminate within five days for networks over $13,000$ nodes, while \Fast provides solutions within two days for the linkedin network and one day for the BAh5E6 network (five million nodes). Therefore, \Fast is significantly faster than its deterministic counterpart.

Now, we compare the effectiveness of our algorithms in optimizing the objective function. Fig.~\ref{fig:balance} and~\ref{fig:opt} demonstrate that the solutions from \Greedy and \Fast are very similar. To quantify this, we compute the relative errors for $R$ and $U$. Let $\add$ and $\Tilde{E}_a$ be the edge sets from \Greedy and \Fast, respectively. We define $\eta=|R\left(\add\right)-R\big(\Tilde{E}_a\big)| / R\left(\add\right)$ and $\theta=|U\left(\add\right)-U\big(\Tilde{E}_a\big)| / U\left(\add\right)$. The relative errors, shown in Table~\ref{tab:time}, are small for all networks, with a maximum of 5.07\%. Thus, \Fast is not only faster, but also nearly as effective as \Greedy in optimizing the objective function.
\begin{figure}
    \centering
    \includegraphics[width=0.9\columnwidth]{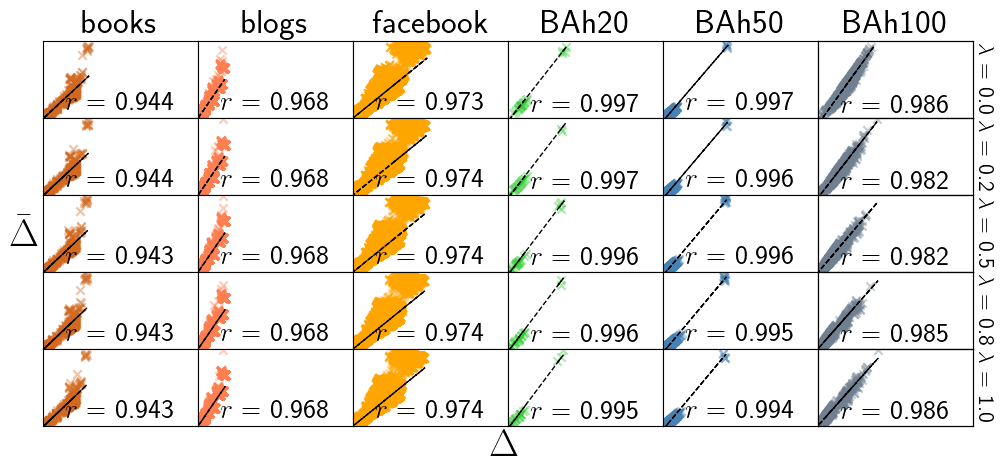}
    \vspace{-0.3cm}
    \caption{Correlation of $\Delta$ and $\bar{\Delta}$ for different types of networks with varying $\lambda$ values.
    }
    \vspace{-0.3cm}
    \label{fig:corre}
\end{figure}

It would be interesting to check to what extent the gradient computation in \Gradient affects its effectiveness compared to \Greedy. In Fig.~\ref{fig:corre}, we present the Pearson correlation coefficients to measure how correlated $\bar{\Delta}(e)$ is to the true decrease of $F$, $\Delta(e)$ on three real-world networks and three synthetic BAh networks, reporting average values for the synthetic cases. In all cases, $\bar{\Delta}(e)$ shows a high correlation (over 0.9) with the true 
 $\Delta(e)$, supporting the gradient computation in \Gradient.  Furthermore, this gives further credibility to the \Fast algorithm since \Fast is based on \Gradient and all other approximation techniques used in \Fast have theoretical guarantees.

To better illustrate the efficiency of \textsc{ApproxCH}, we also report the cardinality $c$ of the returned subset across datasets and different $\epsilon$. As shown in Table~\ref{tab:c-values}, $c$ decreases steadily as $\epsilon$ increases. Moreover, the values of $c$ remain manageable even on large datasets, explaining the practical scalability of our method.
   
\section{Conclusion and Future Work}
We fill a crucial gap in the field of network fairness by leveraging the concept of resistance distance as a measure of information access. We showed that it can be utilized to identify disadvantaged groups concerning information access. We then proposed effective algorithms, particularly a fast edge recommendation algorithm, for optimizing efficiency and mitigating bias in information access. We demonstrated through extensive experiments that our algorithm effectively optimizes fair information dissemination in large-scale social networks.

{A potential avenue for future research is to characterize the approximability of FIAM.} Additionally, further research is required to gain more insight into other measures of information access besides resistance distance and the trade-offs between different fairness criteria when mitigating bias in information access. { Since recommendations may be accepted probabilistically,  we plan to adapt the framework by recommending more than 
$k$ edges to ensure about $k$ are realized.}

Many networks emerging in real world are directed and weighted. A prospective research direction would be to shed some light on up to what extent our algorithmic machinery could be exploited to capture these larger classes of networks and what challenged are encountered that require development of novel approaches and techniques.
\section{Supporting Lemmas and Proofs}\label{sec:support}
\subsection{Supporting Lemmas for Section~\ref{sec:convex2}}
\begin{lemma}(\cite{AwKaZh20,kalantari2015characterization})\label{lem:cvx_hull}
Given a set \(P=\{\vvv_1,\vvv_2,\cdots,\vvv_n\}\) in \(\mathbb{R}^d\) and an error parameter \(\epsilon\in (0,1)\), the algorithm \(\hat{P}=\textsc{ApproxCH}(P,\epsilon)\) outputs a \(c\)-node subset \(\hat{P}\) of the node set $\bar{P}$ of the convex hull, running in \(\mathcal{O}(nc(d+\epsilon^{-2}))\) time, with any point \(p\in \Bar{P}\) having a distance to \(\hat{P}\) of at most \(\epsilon D_P\), where \(D_P\) is the diameter of \(P\).
\end{lemma}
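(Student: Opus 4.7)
The plan is to build $\hat{P}$ incrementally by a greedy farthest-from-hull rule. First I would initialize $\hat{P}_1=\{v_1\}$, where $v_1\in P$ is extreme along an arbitrary direction; this lies on $\bar{P}$ by construction. Then, for $i\ge 2$, I would set $v_i=\arg\max_{p\in P} d\big(p,\mathrm{conv}(\hat{P}_{i-1})\big)$ and update $\hat{P}_i=\hat{P}_{i-1}\cup\{v_i\}$. The loop halts at iteration $c$ as soon as this maximum drops below $\eps D_P$, and we output $\hat{P}=\hat{P}_c$. The inclusion $\hat{P}\subseteq\bar{P}$ follows because the distance to a convex set is a convex function of the query point, so its maximum over $\mathrm{conv}(P)$ is attained at an extreme point of $\mathrm{conv}(P)$, i.e., at a member of $\bar{P}\subseteq P$; picking the argmax inside $P$ therefore automatically lands on a hull vertex.

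The approximation claim is then immediate from the termination test: on exit, every $p\in P$---and hence every $p\in\bar{P}$---satisfies $d\big(p,\mathrm{conv}(\hat{P})\big)\le \eps D_P$, which is exactly the convex-hull approximation the lemma promises. For the running time, the per-iteration bottleneck is evaluating, for each of the $n$ candidates, its distance to $\mathrm{conv}(\hat{P}_{i-1})$, a convex quadratic program. I would plug in the triangle algorithm of Kalantari (equivalently, Frank--Wolfe on the squared-distance objective), which returns an $\eps$-accurate projection in $O(\eps^{-2})$ gradient-style steps of $O(d)$ cost apiece; this gives $O(n(d+\eps^{-2}))$ per iteration and $O(nc(d+\eps^{-2}))$ overall. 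If the ambient dimension were large, a Johnson--Lindenstrauss projection to dimension $O(\eps^{-2}\log n)$ would preserve all pairwise distances and shave the $d$-dependency in inner queries; in our pipeline that JL step is already folded into the parent algorithm and is not recharged here.

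The hard part will be composing the two layers of error: the inner projection oracle returns only an $\eps$-approximate nearest point, while the outer loop needs a true $\eps D_P$ cover of $\bar{P}$. I would resolve this by tightening the inner tolerance by a small constant factor and then invoking the triangle inequality to certify that the accumulated slack still fits within $\eps D_P$; a second subtlety is that the argmax step need only be approximate, so one must show the farthest-first invariant is robust to a multiplicative $(1\pm\eps)$ perturbation of the distance estimates. A secondary difficulty is that no nontrivial a priori bound on $c$ holds in high dimension---one has only the packing bound $c\le 1+(2/\eps)^d$ inside a ball of diameter $D_P$, and the trivial $c\le n$---so the lemma must leave $c$ as an instance-dependent parameter, with Table~\ref{tab:c-values} confirming that $c$ is small in practice on the networks we consider.
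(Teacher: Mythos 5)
This lemma is not proved in the paper at all: it is imported verbatim from the cited works \cite{AwKaZh20,kalantari2015characterization} as a black-box guarantee for the \textsc{ApproxCH} routine, so there is no in-paper argument to compare yours against. Your sketch is a plausible reconstruction of the AVTA-style greedy vertex-enumeration algorithm from those references, and the overall architecture (greedy farthest-from-current-hull selection, termination when the residual distance drops below $\epsilon D_P$, an inner Triangle-Algorithm/Frank--Wolfe projection oracle) is the right one. However, as a proof it has concrete gaps. First, the error composition between the inner oracle and the outer test is exactly the crux of the cited analysis, and you explicitly defer it (``the hard part will be composing the two layers of error'') rather than resolve it; without that, neither the $\epsilon D_P$ cover guarantee nor the $\mathcal{O}(\epsilon^{-2})$ iteration count of the inner solver is actually established. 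Second, the claim that $\hat{P}\subseteq\bar{P}$ does not follow solely from convexity of the distance function: the maximum of a convex function over $\mathrm{conv}(P)$ is \emph{attained} at an extreme point, but the argmax over the finite set $P$ may also be attained at a non-vertex of $\mathrm{conv}(P)$ that happens to lie in $P$, so one needs either a tie-breaking/certification step (as AVTA provides via the Triangle Algorithm's witness) or a genericity assumption. Third, the running-time accounting does not close: an inner projection onto $\mathrm{conv}(\hat{P}_{i-1})$ with $i-1$ anchors costs on the order of $i\,d$ per Frank--Wolfe step, so a naive per-candidate bound is $\mathcal{O}(c\,d\,\epsilon^{-2})$ rather than $\mathcal{O}(d+\epsilon^{-2})$; reaching the stated $\mathcal{O}(nc(d+\epsilon^{-2}))$ total requires the amortization and warm-starting arguments of the cited papers, which your sketch does not supply. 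Since the lemma is a citation, the cleanest course is to rely on the external result rather than reprove it; if you do want a self-contained proof, these three points are where the real work lies.
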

\begin{lemma} \label{lem:jllemma}
(JL Lemma~\cite{johnson1984extensions}) Given fixed vectors $\vvv_1,\vvv_2,\cdots,\vvv_n\in \mathbb{R}^{d}$ and a real number $\epsilon > 0$. Let $q$ be a positive integer such that $q \geq 24 \log n / \epsilon^{2}$ and $\QQ_{q \times d}$ a random matrix with each entry being $1 / \sqrt{q} \text { or }-1 / \sqrt{q}$ with identical probability. Then, with probability at least $1-1/n$, the following statement holds for any pair of $1 \leq i, j \leq n$:
\begin{align}
    \notag (1-\epsilon)\|\vvv_{i}-\vvv_{j}\|^{2} \leq\|\QQ \vvv_{i}-\QQ \vvv_{j}\|^{2} \leq(1+\epsilon)\|\vvv_{i}-\vvv_{j}\|^{2}.
\end{align}
\end{lemma}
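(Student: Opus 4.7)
The plan is to reduce the pairwise statement to a single-vector concentration inequality, establish that inequality through a moment generating function bound, and close with a union bound. First I would exploit linearity, $\QQ\vvv_i-\QQ\vvv_j = \QQ(\vvv_i-\vvv_j)$, so it suffices to show that for any fixed $\uu\in\mathbb{R}^d$ with $\|\uu\|=1$ (by homogeneity) and any $\eps\in(0,1)$,
\[
\Pr\bigl[\,\|\QQ\uu\|^2\notin[1-\eps,\,1+\eps]\,\bigr]\leq 2n^{-3}.
\]
A union bound over the $\binom{n}{2}<n^2/2$ pairs then yields overall failure probability at most $1/n$.

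The technical heart is bounding the moment generating function of $\|\QQ\uu\|^2$. Writing the $i$-th row of $\QQ$ as $\tfrac{1}{\sqrt q}(\xi_{i1},\ldots,\xi_{id})$ with i.i.d.\ Rademacher $\xi_{ik}$, set $Y_i=\tfrac{1}{\sqrt q}\sum_{k=1}^d \xi_{ik}u_k$, so that $\mathbb{E}[Y_i^2]=1/q$ and $\|\QQ\uu\|^2=\sum_{i=1}^q Y_i^2$. The key row-wise estimate is
\[
\mathbb{E}\bigl[e^{sY_i^2}\bigr]\leq \bigl(1-2s/q\bigr)^{-1/2},\qquad 0\leq s<q/2,
\]
which I would derive via the Gaussian decoupling identity $e^{sY_i^2}=\mathbb{E}_g[\exp(\sqrt{2s}\,g Y_i)]$ with $g\sim\mathcal{N}(0,1)$, factoring $\mathbb{E}_{\xi}[\exp(\sqrt{2s}\,g\sum_k\xi_{ik}u_k/\sqrt q)]=\prod_k\cosh(\sqrt{2s}\,g u_k/\sqrt q)$, and applying $\cosh(x)\leq e^{x^2/2}$ together with $\|\uu\|=1$. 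Independence of the rows then gives $\mathbb{E}[\exp(s\|\QQ\uu\|^2)]\leq(1-2s/q)^{-q/2}$, and a Markov step with optimal $s=\tfrac{q\eps}{2(1+\eps)}$ yields
\[
\Pr\bigl[\|\QQ\uu\|^2\geq 1+\eps\bigr]\leq \bigl((1+\eps)e^{-\eps}\bigr)^{q/2}\leq e^{-q\eps^2/8},
\]
using the elementary inequality $\ln(1+\eps)-\eps\leq -\eps^2/4$ for $\eps\in(0,1)$. A symmetric argument with $s<0$ handles the lower tail with the same exponent, so that with $q\geq 24\log n/\eps^2$ the single-pair failure probability is at most $2e^{-3\log n}=2n^{-3}$, closing the union bound.

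The delicate step is the row-wise moment generating bound, since $Y_i^2$ is a squared Rademacher sum rather than a chi-squared variable, and therefore the standard Gaussian-JL derivation does not apply verbatim. The Gaussian decoupling trick is the cleanest way around this, letting me swap the $\xi$- and $g$-expectations, factor across the independent Rademacher entries, and reduce to an explicit Gaussian MGF. Once this is in hand, matching the constant $24$ in the threshold to the $1/8$ in the Chernoff exponent is routine bookkeeping.
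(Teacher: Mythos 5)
The paper never proves this lemma: it is imported verbatim from the literature (the $\pm 1/\sqrt{q}$ version is really due to Achlioptas's ``database-friendly'' projections rather than the cited Johnson--Lindenstrauss paper), so there is no in-paper argument to compare against. Your reconstruction follows the standard route, and the upper-tail half is correct: the reduction to a unit vector $\uu$, the decoupling identity $e^{sY_i^2}=\mathbb{E}_g[e^{\sqrt{2s}\,gY_i}]$, the factorization into $\prod_k\cosh(\cdot)$, the bound $\cosh(x)\le e^{x^2/2}$, the resulting MGF bound $(1-2s/q)^{-q/2}$, and the Chernoff step with $s=q\eps/(2(1+\eps))$ all check out, as does $\ln(1+\eps)-\eps\le-\eps^2/4$ on $(0,1)$ and the final union bound over $\binom{n}{2}$ pairs.

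The genuine gap is the sentence ``a symmetric argument with $s<0$ handles the lower tail with the same exponent.'' It is not symmetric. For negative $s=-t$ the decoupling identity becomes $e^{-tY_i^2}=\mathbb{E}_g[e^{i\sqrt{2t}\,gY_i}]$, the Rademacher expectation then produces $\prod_k\cos(\sqrt{2t}\,gu_k/\sqrt q)$, and the pointwise inequality you would need, $\cos(x)\le e^{-x^2/2}$, is false (take $x=2\pi$). Worse, the target row-wise bound $\mathbb{E}[e^{-tY_i^2}]\le(1+2t/q)^{-1/2}$ is itself false for large $t$: with $d=2$ and $\uu=(1,1)/\sqrt2$ the variable $qY_i^2$ equals $0$ or $2$ with probability $1/2$ each, so the left-hand side tends to $1/2$ while the right-hand side tends to $0$. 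The lower tail therefore requires a genuinely different estimate, not a sign flip. The standard fix is elementary: from $e^{-x}\le 1-x+x^2/2$ for $x\ge 0$ and the fourth-moment bound $\mathbb{E}\big[(\sum_k\xi_{ik}u_k)^4\big]\le 3\|\uu\|^4$ one gets $\mathbb{E}[e^{-tY_i^2}]\le\exp\big(-t/q+3t^2/(2q^2)\big)$, and choosing $t=q\eps/3$ yields $\Pr\big[\|\QQ\uu\|^2\le 1-\eps\big]\le e^{-q\eps^2/6}\le n^{-4}$, which is more than enough for your union bound. With that replacement (and the harmless restriction to $\eps\in(0,1)$, since the lower bound is vacuous for $\eps\ge 1$) the proof is complete.
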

\begin{lemma} \label{lem:solver}(Fast SDDM Solver~\cite{SpTe14,CoKyMiPaJaPeRaXu14,solver2023}) There is a nearly linear-time solver $\mathbf{g} = \SDDMSolver(\LL, \bb, \delta)$ which takes a symmetric positive semi-definite matrix $\LL_{n\times n}$ with $m$ nonzero entries, a vector $\bb \in \mathbb{R}^n$, and an error parameter $\delta > 0$, and returns a vector $\mathbf{g} \in \mathbb{R}^n$ satisfying $\norm{\mathbf{g} - \LL^{-1} \bb}_{\LL} \leq \delta \norm{\LL^{-1} \bb}_{\LL}$ with probability at least $1-1/n$, where $\norm{\mathbf{g}}_{\LL} = \sqrt{\mathbf{g}^\top \LL \mathbf{g}}$. This solver runs in expected time $\tilde{\mathcal{O}}(m)$, where $\tilde{\mathcal{O}}(\cdot)$ notation suppresses the ${\rm poly} (\log n)$ factors. 
\end{lemma}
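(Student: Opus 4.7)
The plan is to reduce the SDDM linear system to a graph Laplacian solve and then attack the Laplacian system with a preconditioned iterative method whose per-iteration work is $\tilde{\mathcal{O}}(m)$ and whose iteration count is polylogarithmic in $n$ and $1/\delta$. First I would invoke the standard reduction that any SDDM matrix with $m$ nonzeros can be embedded into a graph Laplacian on $O(n)$ vertices with $O(m)$ edges, so that it suffices to produce $\mathbf{g}$ with $\norm{\mathbf{g} - \LL^{-1}\bb}_{\LL} \leq \delta \norm{\LL^{-1}\bb}_{\LL}$ when $\LL$ is a graph Laplacian.

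The central tool is a spectral sparsifier: a weighted subgraph $H$ of the input graph with $\tilde{\mathcal{O}}(n)$ edges whose Laplacian satisfies $\LL_H \pleq \LL \pleq \kappa\, \LL_H$ for a controlled condition number $\kappa$. Using $\LL_H$ as a preconditioner inside Chebyshev iteration (or preconditioned conjugate gradient), the standard convergence theory in $\norm{\cdot}_{\LL}$ gives a per-step error-reduction factor of roughly $1 - 1/\sqrt{\kappa}$, so $\tilde{\mathcal{O}}(\sqrt{\kappa}\log(1/\delta))$ outer iterations suffice. Each iteration costs one multiplication by $\LL$, which is $O(m)$, plus one approximate application of $\LL_H^{-1}$.

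The key idea that avoids an expensive exact solve against the preconditioner is a recursive construction: the sparsifier is built from a low-stretch spanning tree plus $o(n)$ carefully sampled off-tree edges (an ultra-sparsifier), so that applying $\LL_H^{-1}$ itself reduces to solving a strictly smaller graph Laplacian system. Unrolling this recursion produces a multilevel hierarchy whose vertex counts shrink geometrically; the total work summed across the $O(\log n)$ levels remains $\tilde{\mathcal{O}}(m)$. The $1 - 1/n$ success probability is inherited from the randomized edge-sampling used inside the sparsifier construction, together with a union bound over levels.

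The hard part is twofold: (i) constructing a spectral sparsifier in nearly linear time while keeping $\kappa$ polylogarithmic, and (ii) propagating the $\norm{\cdot}_{\LL}$-norm error faithfully through the recursion so that the top-level output still meets the stated relative accuracy with the promised failure probability. Both steps are delicate: the sparsifier analysis rests on matrix Chernoff-type concentration for sums of rank-one edge updates, and the recursion analysis requires that the outer iteration remain stable under \emph{approximate} inner solves, which is precisely why a Chebyshev-style outer method (rather than plain conjugate gradient) is used in this setting. Once these two pieces are in place, the stated running time $\tilde{\mathcal{O}}(m)$ and the $\LL$-norm accuracy guarantee follow by combining the per-level iteration count with the geometric shrinkage of the problem size.
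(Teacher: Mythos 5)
You should first note a mismatch of expectations here: the paper does not prove this lemma at all. It is stated as an imported, black-box result with citations to Spielman--Teng, Cohen et al., and Gao--Kyng--Spielman, and the paper's only use of it is as an oracle inside \textsc{Farthest}. So there is no in-paper argument to compare against; what you have written is a reconstruction of the proof strategy from the cited literature. On its own terms, your roadmap is accurate and hits the right landmarks: the Gremban-style reduction from SDDM systems to graph Laplacians, ultra-sparsifiers built from a low-stretch spanning tree plus sampled off-tree edges, partial elimination yielding a strictly smaller Laplacian so that the preconditioner solve recurses, preconditioned Chebyshev iteration with $\tilde{\mathcal{O}}(\sqrt{\kappa}\log(1/\delta))$ outer steps, geometric shrinkage across $O(\log n)$ levels giving total work $\tilde{\mathcal{O}}(m)$, and a union bound over the randomized sampling for the $1-1/n$ success probability. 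You also correctly identify why Chebyshev rather than plain CG is used (robustness of the convergence analysis under inexact inner solves), which is a genuinely subtle point.

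That said, be clear that what you have is a proof \emph{outline}, not a proof: the two components you flag as ``the hard part''---constructing the sparsifier/ultra-sparsifier in nearly linear time with polylogarithmic condition number (matrix Chernoff over rank-one edge updates, effective-resistance or stretch-based sampling), and propagating the relative $\norm{\cdot}_{\LL}$-norm error stably through the recursion---are precisely where the hundreds of pages of the cited papers live, and your sketch asserts rather than establishes them. One further small point: the lemma writes $\LL^{-1}$ for a positive semi-definite (hence singular, when $\LL$ is a Laplacian) matrix; any careful writeup should state the guarantee in terms of the pseudoinverse restricted to the range of $\LL$, i.e., for $\bb \perp \mathbf{1}$. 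None of this is a gap in the sense of a wrong step, but for a result the paper itself only cites, the appropriate treatment is a citation, not a blind re-derivation.
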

\begin{lemma} (\cite{li2019current}\label{lem:diags})
    Consider an error parameter $\epsilon\in(0,1)$ and a connected graph $\calG$ with $m$ edges, where $\LL^{\dag}$ is the pseudoinverse of its Laplacian matrix. There is a routine $\rr=\textsc{AppDiag}(\calG,\epsilon)$ that runs in $\tilde{\mathcal{O}}(m)$, and returns a vector $\rr$ containing all the diagonal elements of $\LL^{\dag}$, and for every node $u\in V$, with high probability it satisfies:
    \begin{align}
        (1-\epsilon)\ee_u^{\top} \LL^{\dagger} \ee_u\leq \rr_u\leq (1+\epsilon)\ee_u^{\top} \LL^{\dagger} \ee_u.
    \end{align}
\end{lemma}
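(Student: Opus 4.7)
The plan is to recast each diagonal entry of $\LL^{\dag}$ as a squared Euclidean norm and then combine a Johnson--Lindenstrauss sketch with the fast SDDM solver. Using $\LL=\BB^{\top}\BB$ together with the identity $\LL^{\dag}\LL\LL^{\dag}=\LL^{\dag}$ (which in turn follows from $\LL\LL^{\dag}=\II-\tfrac{1}{n}\JJ$ and $\LL^{\dag}\one=\zeov$), one obtains
\[
\LL^{\dag}_{uu} \;=\; \ee_u^{\top}\LL^{\dag}\BB^{\top}\BB\LL^{\dag}\ee_u \;=\; \snorm{\BB\LL^{\dag}\ee_u}_{2}^{2},
\]
so every diagonal entry is the squared length of the corresponding column of the matrix $\BB\LL^{\dag}\in\mathbb{R}^{m\times n}$. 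Estimating all $n$ diagonals therefore reduces to simultaneously preserving $n$ column norms of a single $m\times n$ matrix, a textbook setting for JL-type dimension reduction.

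Set $\tilde{\eps}=\eps/3$ and $q=\lceil 24\log n/\tilde{\eps}^{2}\rceil$, and draw a random $q\times m$ matrix $\QQ$ with i.i.d.\ $\pm 1/\sqrt{q}$ entries. Applying Lemma~\ref{lem:jllemma} to the $n+1$ vectors $\{\zeov\}\cup\{\BB\LL^{\dag}\ee_u\}_{u\in V}$ yields, with probability at least $1-1/n$,
\[
\snorm{\QQ\BB\LL^{\dag}\ee_u}_{2}^{2} \;\stackrel{\tilde{\eps}}{\approx}\; \LL^{\dag}_{uu} \qquad \forall\,u\in V.
\]
Writing $\ZZ\defeq\QQ\BB\LL^{\dag}$, the $i$-th row of $\ZZ$ equals $(\LL^{\dag}\BB^{\top}\QQ[i,:]^{\top})^{\top}$, so it can be computed by a single Laplacian solve against $\BB^{\top}\QQ[i,:]^{\top}$ (the right-hand side lies in $\one^{\perp}$ because every column of $\BB^{\top}$ does). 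Invoke Lemma~\ref{lem:solver} to obtain $\tilde{\zz}_i\gets\SDDMSolver(\LL,\BB^{\top}\QQ[i,:]^{\top},\delta)$ for a parameter $\delta$ to be pinned down below, assemble the outputs into $\ZZtil$, and return $\rr_u\defeq\snorm{\ZZtil\ee_u}_{2}^{2}$ for every $u\in V$. The cost is $q$ solves of $\tilde{\mathcal{O}}(m)$ each; treating $\eps$ as a constant so that $q=\tilde{\mathcal{O}}(1)$, the total running time is $\tilde{\mathcal{O}}(m)$.

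The main obstacle is transferring the solver's $\LL$-norm guarantee $\snorm{\tilde{\zz}_i-\zz_i}_{\LL}\le \delta\snorm{\zz_i}_{\LL}$ into a coordinate-level bound strong enough to conclude $\snorm{\ZZtil\ee_u}_{2}^{2}\stackrel{\tilde{\eps}}{\approx}\snorm{\ZZ\ee_u}_{2}^{2}$ for every $u$. Mirroring the strategy used for Lemma~\ref{lem:error2}, one passes from the $\LL$-norm to the Euclidean norm via $\snorm{\tilde{\zz}_i-\zz_i}_{2}\le\sigma_{\min}(\LL)^{-1/2}\snorm{\tilde{\zz}_i-\zz_i}_{\LL}$, where $\sigma_{\min}(\LL)$ is the smallest nonzero eigenvalue of $\LL$ and satisfies $\sigma_{\min}(\LL)\ge\Omega(n^{-c})$ for a small constant $c$ on a connected graph. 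Choosing $\delta$ polynomially small in $\eps$ and $1/n$ (analogous to the $\delta\sim\eps/n^{5/2}$ choice of Lemma~\ref{lem:error2}) then makes every pointwise perturbation $|\tilde{\zz}_i(u)-\zz_i(u)|$ negligible compared to $\sqrt{\LL^{\dag}_{uu}/q}$, whence $\rr_u\stackrel{\tilde{\eps}}{\approx}\snorm{\ZZ\ee_u}_{2}^{2}$. Composing this with the JL approximation via Fact~\ref{fac:1} yields $\rr_u\stackrel{\eps}{\approx}\LL^{\dag}_{uu}$ for all $u\in V$. Since $\delta$ enters the solver runtime only through $\log(1/\delta)=\mathcal{O}(\log n)$, the stated $\tilde{\mathcal{O}}(m)$ bound is preserved, and a union bound over the two high-probability events ($1-1/n$ each) delivers the claim with high probability.
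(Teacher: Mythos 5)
The paper gives no proof of this lemma: it is imported as a black box from the cited reference \cite{li2019current}, so there is no in-paper argument to compare against. Your reconstruction is, however, essentially the standard proof behind \textsc{AppDiag}, and it is also the same template the authors use for their own Lemma~\ref{lem:error2}: write $\LL^{\dag}_{uu}=\snorm{\BB\LL^{\dag}\ee_u}_2^2$ via $\LL=\BB^{\top}\BB$ and $\LL^{\dag}\LL\LL^{\dag}=\LL^{\dag}$, sketch the columns of $\BB\LL^{\dag}$ with a JL matrix (including the origin so that distances become norms), and realize each row of the sketch by one nearly-linear-time Laplacian solve against $\BB^{\top}\QQ[i,:]^{\top}\in\one^{\perp}$. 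The approach is correct and the runtime accounting is right.

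Two quantitative steps are asserted rather than established, and you should close them. First, the claim that the solver perturbation is ``negligible compared to $\sqrt{\LL^{\dag}_{uu}/q}$'' silently requires a \emph{lower} bound on $\LL^{\dag}_{uu}$; one does hold, e.g.\ by Cauchy--Schwarz $(1-1/n)^2=\big(\ee_u^{\top}\LL^{\dag}\LL\ee_u\big)^2\le \LL^{\dag}_{uu}\,d_u$, giving $\LL^{\dag}_{uu}\ge (n-1)/n^2$ --- the exact analogue of the bound $\snorm{\ZZ\bb_e}^2\ge (1-3\eps/125)/n$ invoked in the proof of Lemma~\ref{lem:error2} --- but it must be stated. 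Second, the solver guarantee is per \emph{row} ($\snorm{\tilde{\zz}_i-\zz_i}_{\LL}\le\delta\snorm{\zz_i}_{\LL}$) while the quantity you output is a \emph{column} norm; you need an aggregation step such as
\begin{align}
\snorm{(\ZZtil-\ZZ)\ee_u}^2\le\sum_{i=1}^{q}\snorm{\tilde{\zz}_i-\zz_i}^2\le\frac{\delta^2}{\lambda_{n-1}}\sum_{i=1}^{q}\snorm{\zz_i}_{\LL}^2\le\frac{\delta^2 m}{\lambda_{n-1}},\notag
\end{align}
using that $\BB\LL^{\dag}\BB^{\top}$ is an orthogonal projection, before choosing $\delta$ polynomially small in $1/n$. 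Both gaps are routine and fillable, so the proof is correct in substance.
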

\subsection{Supporting Proofs}
\noindent\textbf{Derivation of Definition~\ref{lem:delta_bar}}
If we add a small weight $\sigma$ to edge $e$, according to the  Sherman-Morrison formula~\cite{Me73}, $\LL^{\dag}$ will update to
\(
    (\LL+\sigma\bb_e\bb_e^\top)^\dag = \LL^\dag - \sigma\frac{\LL^\dag \bb_e \bb_e^\top \LL^\dag}{1+\sigma \bb_e^\top \LL^\dag \bb_e},
\)
and hence
\begin{align}
    \notag \frac{\partial{F}}{\partial{\ww_i}} =& \lim _{\sigma \rightarrow 0} \frac{1}{\sigma}\Big[(1-\lambda)\ldtrace{\LL^{\dag}-\MM}-(1-\lambda)\ldtrace{\LL^{\dag}}+\\
    \notag&\lambda \Big(\frac{n}{\abs S}\big(\LL^{\dag}-\MM\big)_{S}+\ldtrace{\LL^{\dag}-\MM}\Big)^2+\\
    \notag &\lambda\Big(\frac{n}{\abs T}\big(\LL^{\dag}-\MM\big)_{T}+\ldtrace{\LL^{\dag}-\MM}\Big)^2-\\
    \notag&\lambda\Big(\frac{n}{\abs S}\LL^\dag_{S}+ \ldtrace{\LL^\dag}\Big)^2-\lambda\Big(\frac{n}{\abs T}\LL^\dag_{T}+ \ldtrace{\LL^\dag}\Big)^2\Big]\\
    =&\notag-\bb_e^{\top}\Big[(1-\lambda)\LL^{2\dag}+\\
    \notag &2\lambda \frac{n}{\abs S}\Big(\ldtrace{\LL^{\dag}}+\frac{n}{\abs S}\LL^{\dag}_{S}\Big)\sum_{v\in S}\LL^{\dag}\ee_{v}\ee_{v}^{\top}\LL^{\dag}+\\
    \notag&2\lambda \frac{n}{\abs T}\Big(\ldtrace{\LL^{\dag}}+\frac{n}{\abs T}\LL^{\dag}_{T}\Big)\sum_{v\in T}\LL^{\dag}\ee_{v}\ee_{v}^{\top}\LL^{\dag}+\\
    &2\lambda\Big(2\ldtrace{\LL^{\dag}}+\frac{n}{\abs S}\LL^{\dag}_{S}+\frac{n}{\abs T}\LL^{\dag}_{T}\Big)\LL^{2\dag}\Big]\bb_e,\notag
\end{align}
where $\MM=\sigma \dfrac{\LL^{\dag} \bb_{e} \bb_{e}^{\top} \LL^{\dag}}{1+\sigma\bb_{e}^{\top} \LL^{\dag} \bb_{e}}$.
Based on the definitions of $\alpha$, $\beta$, and $\gamma$, we could conclude the proof.\qed

\noindent\textbf{Proof of Lemma~\ref{lem:furthest}}
Let $\xx, \yy= \argmax_{i,j\in P}\|\vvv_i-\vvv_j\|$ be the furthest pair of points in $P$, and $\xx^{\prime}, \yy^{\prime}= \argmax_{i,j\in \hat{P}}\|\vvv_i-\vvv_j\|$ be the furthest pair of points in $\hat{P}$. Based on Lemma~\ref{lem:cvx_hull}, we have $\|\xx-\xx^{\prime}\|\leq \epsilon/25 D_P$, and $\|\yy-\yy^{\prime}\|\leq \epsilon/25 D_P$. According to the triangle inequality,  
\begin{align}
    \notag\abs{D_P-D_{\hat{P}}}=&\abs{\|\xx-\yy\|-\|\xx^\prime-\yy^\prime\|}\leq \|\xx-\xx^\prime-\yy + \yy^\prime\|\\
    \leq &\|\xx-\xx^\prime\|+\|\yy-\yy^\prime\|\leq 2\epsilon/25 D_P\label{eq:cc1}.
\end{align}
Multiplying both sides of Eq.~\eqref{eq:cc1} by $D_P+D_{\hat{P}}$, we get 
\begin{align}\label{eq:cc2}
    \notag&\big|D_P^2-D_{\hat{P}}^2\big| =\big|\big(D_P+D_{\hat{P}}\big)\big(D_P-D_{\hat{P}}\big)\big|\\
    &\leq |D_P+D_{\hat{P}}||D_P-D_{\hat{P}}| \leq 2\epsilon/25 D_P\big(D_P+D_{\hat{P}}\big).
\end{align}
Dividing both sides of Eq.~\eqref{eq:cc2} by $D_P^2$ yields
\begin{align}
    \notag&\big|\big(D_P^2-D_{\hat{P}}^2\big)/D_P^2\big|\leq2\epsilon/25(1+D_{\hat{P}}/D_P)\leq\epsilon/5,
\end{align}
which concludes the proof.\qed

\noindent\textbf{Proof of Lemma~\ref{lem:error2}}	
To prove the lemma, it suffices to prove that for any node pair $e=(u,v)$,
	\begin{align*}
		\big|\|\ZZ \bb_e\|^2-\|\ZZtil \bb_e\|^2\big| 
		=&
		\big|\|\ZZ \bb_e\|-\|\ZZtil \bb_e\|\big|\times	\big|\|\ZZ \bb_e\|+\|\ZZtil \bb_e\|\big| \\\notag 
		\le&
		\left(\frac{2\epsilon}{125}+\frac{\epsilon^2}{15625}\right)\norm{\ZZ \bb_e}^2.
	\end{align*}

	The above equation can be further reduced to
		\begin{equation}\label{EE211}
		\notag \big|\norm{\ZZ \bb_e}-\|\ZZtil \bb_e\|\big| \le
		\frac{\epsilon}{125}\norm{\ZZ \bb_e}.
		\end{equation}

Using the triangle inequality, and letting $P_{uv}$ be a simple path connecting node $u$ and $v$, we get	
\begin{align*}
		\big|\norm{\ZZ \bb_e} -\|\ZZtil\bb_e\|\big|
		\leq \|(\ZZ - \ZZtil) \bb_e\| 
		\leq 
		\sum\limits_{(a, b) \in P_{uv}}
		\|(\ZZ - \ZZtil) \bb_{ab}\|.
\end{align*}

Using the Cauchy-Schwarz inequality, we obtain \begin{align*}
		&\bigg(\sum\limits_{(a, b) \in P_{uv}}
		\|(\ZZ - \ZZtil) \bb_{ab}\|\bigg)^2 
        \leq 
        n\sum\limits_{(a, b) \in P_{uv}}
		\|(\ZZ - \ZZtil) \bb_{ab}\|^2 \\
		\leq& 
        n\sum\limits_{(a, b) \in E}
		\|(\ZZ - \ZZtil) \bb_{ab}\|^2 
		=  n \| (\ZZ - \ZZtil) \BB^\top \|_F^2.
		\end{align*}

We transform the above Frobenius norm into $\LL$-norm:
\begin{align*}
		&n \|(\ZZ - \ZZtil) \BB^\top \|_F^2
		=n\ldtrace{(\ZZ - \ZZtil) \BB^\top \BB (\ZZ - \ZZtil)^\top} \\=&n\trace{(\ZZ - \ZZtil) \LL (\ZZ - \ZZtil)^\top} \\
		=&n\sum_{i=1}^q \norm{\zz_i - \tilde{\zz}_i}_{\LL}^2 
		\leq n\delta^2 \sum_{i=1}^q \norm{\zz_i}_{\LL}^2.
		\end{align*}

Considering $\ldtrace{\LL^{\dagger}} \le n(n^2-1)/6$ (cf.~\cite{LOVEJOY2003333}), we obtain
\begin{align*}
		&n\delta^2 \sum_{i=1}^q \|\zz_i\|_{\LL}^2 = n\delta^2
		\| \Bar{Q} \LL^{\dagger} \BB^\top \|_F^2 = n\delta^2
		\sum_{(a, b) \in E} \|\Bar{Q} \LL^{\dagger} \bb_{ab}\|^2
		\\
		\le &
		n\delta^2 \left(1+\frac{3\eps}{125}\right)\sum_{(a, b) \in E}
		\|\EE^{X}\LL^{\dagger} \bb_{ab}\|^2 \\
		=&n\delta^2\left(1+\frac{3\eps}{125}\right)
		\|\EE^{X}\LL^{\dagger} \BB^\top \|_F^2
		\\
		= & n\delta^2\left(1 + \frac{3\eps}{125}\right)\trace{\EE^{X}\LL^{\dagger}\BB^\top\BB \LL^{\dagger}\EE^{X}}
		\leq \delta^2 \frac{(1 + 3\eps/125)n^4}{6}.
		\end{align*}
On the other hand, we give a lower bound on $\norm{\ZZ \bb_e}^2$:
 \begin{equation}\label{EE22}
	\begin{split}
	\notag\norm{\ZZ \bb_e}^2
	&\geq \left(1-{3\eps}/{125}\right)\bb_e^{\top} \LL^{\dagger}\EE^{X}\LL^{\dagger} \bb_e \\
	&\geq \frac{1-3\epsilon/125}{\lambda_{n-1}^{2}} \geq \frac{1-3\epsilon/125}{n}.
	\end{split}
	\end{equation}
Combining the above two equations, we have
		\begin{align*}
		&\quad \frac{
			\big| \norm{\ZZ \bb_e} -  \|\ZZtil  \bb_e\|\big|^2
		}{
			\norm{\ZZ \bb_e}^2
		}
		\le
		\delta^2\frac{1+3\epsilon/125}{6(1-3\epsilon/125)} n^{5}
			\le \left(\frac{\epsilon}{125}\right)^2,
		\end{align*}
together with the initial condition, this ends the proof.\qed
\section*{Acknowledgment}
The work was supported by the National Natural Sci-
ence Foundation of China (Nos. 62372112 and 61872093).

\normalem

\ifCLASSOPTIONcaptionsoff
  \newpage
\fi

\end{document}